\documentclass{article}
\usepackage{authblk}

\usepackage{makecell}
\usepackage{ulem}
\usepackage{stmaryrd}
\usepackage{comment}
\usepackage{subcaption}
\usepackage{enumitem}
\usepackage{algorithm}
\usepackage{graphicx} 
\usepackage[english]{babel}
\usepackage{amsmath}
\usepackage[noend]{algpseudocode}
\usepackage{graphicx}

\usepackage{pgfplots}
\pgfplotsset{width=10cm,compat=1.9}

\usepackage{url}

\usepackage{amsthm}
\newtheorem{theorem}{Theorem}[section]
\newtheorem{corollary}{Corollary}[section]
\newtheorem{proposition}[theorem]{Proposition}\newtheorem{remark}[theorem]{Remark} 

\usepackage{placeins}
\usepackage{xcolor}
\usepackage{xfrac}
\usepackage{todonotes}
\usepackage{verbatim}
\usepackage{amsfonts}
\usepackage{multirow}
\usepackage{multicol}

\usepackage{xifthen}
\newcommand{\ifequals}[3]{\ifthenelse{\equal{#1}{#2}}{#3}{}}
\newcommand{\case}[2]{#1 #2} 
\newenvironment{switch}[1]{\renewcommand{\case}{\ifequals{#1}}}{}

\AtBeginDocument{\colorlet{defaultcolor}{.}}

\def\version{2}

\newcommand{\add}[1]{\begin{switch}{\version}\case{1}{\textcolor{blue}{#1}}\case{2}{\textcolor{defaultcolor}{#1}}\end{switch}}

\newcommand{\remove}[1]{\begin{switch}{\version}\case{1}{\add{\sout{#1}}}\case{2}{\iffalse{#1}\fi}\end{switch}}

\presetkeys%
    {todonotes}%
    {inline,backgroundcolor=yellow}{}

\begin{document}

\raggedbottom

\title{On the Multi-Commodity Flow with convex objective function: Column-Generation approaches}

\author[1]{Guillaume Beraud-Sudreau}

\author[2]{Lucas Létocart}

\author[1]{Youcef Magnouche}

\author[1]{Sébastien Martin}

\affil[1]{Huawei Technologies Ltd., Paris Research Center, Boulogne-Billancourt, France}
\affil[2]{Université Sorbonne Paris Nord, CNRS, Laboratoire d'Informatique de Paris Nord, LIPN, F-93430 Villetaneuse, France}




\maketitle

\begin{abstract}
The purpose of this work is to develop an algorithmic optimization approach for a capacitated Multi-Commodity flow problem, where the objective is to minimize the total link costs, where the cost of each arc increases convexly with its utilization. This objective is particularly relevant in telecommunication networks, where device performance can deteriorate significantly as the available bandwidth on a link becomes limited. By optimizing this convex function, traffic is efficiently distributed across the network, ensuring optimal use of available resources and preserving capacity for future demands. This paper describes the Convex Multi-Commodity Flow Problem and presents methodologies to solve both its Splittable and Unsplittable variants. In the Splittable version, flows can be fractionally distributed across multiple paths, while in the Unsplittable version, each commodity must be routed through a single path. Our approach employs Column-Generation techniques to address the convexly increasing cost functions associated with arc utilization, effectively accommodating various forms of convex increasing cost functions, including nondifferentiable or black-box convex increasing functions. The proposed methods demonstrate strong computational efficiency, offering a robust framework for managing network flows in complex telecommunication environments.
\end{abstract}

\section{Introduction}

Routing problems in telecommunication networks can be modeled as a Multi-Commodity flow problem \cite{wang2008}. 
The network is represented by a graph $G=(V, A)$ where the nodes are telecommunication devices and the arcs are telecommunication links.
Data transmissions are represented as a set of commodities $K$, where each commodity $k\in K$ requires a given bandwidth $b_k \in \mathbb{R}^+$ between a source node $s_k \in V$ and a target node $t_k \in V$; 

In the convex Multi-commodity flow problem, each arc $a \in A$ has an associated capacity $c_a \in \mathbb{R}^+$ and a convex, lipschitz continuous, increasing cost function $\pmb{r}_a: [0,c_a] \longrightarrow \mathbb R$ of the load flowing through it.

A routing minimizing the sum of arcs costs is relevant in telecommunications for two different reasons.
First, as costs convexity implies an increasing marginal cost of adding new traffic on an already used arc,
a routing strategy minimizing the sum of convex costs over the network will thus avoid creating highly saturated arcs and tend to disperse the traffic along a higher number of moderately used arcs. It ensures that, when possible, each arc will still have some available bandwidth that could be used to accept future demands; this reflects well the needs of telecommunication network operators to ensure that their routing capability will be met in the future and minimize the need of rerouting already accepted demands.
Second, most of the delays of telecommunication network devices relate convexly to their saturation levels; for example, the incurred delay is usually modeled by the Kleinrock function $\sum_a \frac{x_a}{c_a'-x_a}$ where $x_a \in \mathbb{R}^+$ is the flow crossing arc $a \in A$ \add{and $c'_a \in \mathbb{R}^{*}_{+}$}. The Kleinrock function is convex, increasing and lipschitz-contiuous on $[0,c_a]$ for any $c_a \in [0, c'_a[$.

\paragraph{Splittable and Unsplittable Convex Multi-Commoditiy Flow Problem}

The Convex Multi-Commodity Flow Problem (denoted CMCF Problem) requires to define, for each commodity $k \in K$, a flow from $s_k$ to $t_k$ of size $b_k$ such that the sum of all arcs cost functions is minimized.
If a commodity cannot enter the network, a large penalty is applied.
Two Convex Multi-Commodity Flow problems can be considered, depending on the nature of the flows. In the Splittable Convex Multi-Commodity Flow (Splittable-CMCF) problem considered here, each commodity can be routed through a set of paths, each path carrying a fractional part of the commodity.
In the Unsplittable Multi-Commodity Flow problem, each commodity is routed through a single path.

For both the Splittable and Unsplittable-CMCF problems, this paper will focus on the case where the cost functions $\pmb{r}_a$ are increasing, lipschitz-continuous and convex in $[0, c_a]$.
The Splittable-CMCF problem can be formalized as a convex optimization problem, while the Unsplittable-CMCF problem is an integer convex problem. Note that the Splittable version of the problem is a (weak) relaxation of the Unsplittable one.

\paragraph{State of the Art} 
The Multi-Commodity Flow problem (noted MCF), with increasing linear cost objective functions, has been studied for decades. For example, \cite{ford1958} suggested solving the MCF problem with a Column Generation algorithm in 1958.

The more general Convex Multi-Commodity Flow Problem has been studied since 1973, when a first method, the flow deviation method \cite{leblanc1973}, \add{\cite{fratta1973}} was proposed. This approach is based on the Frank-Wolf\add{e} algorithm \cite{frank1956}, and exploits the fact that the subproblem used to generate new search directions in the Splittable-CMCF problem is a simple shortest-path problem, weighted by the marginal cost of each arc at the current solution.

Several other approaches have since improved this idea: for instance, the Projection Method proposed in \cite{bertsekas1983} improves the re-balancing of flows at each iteration of the algorithm. \add{These methods are limited to multi-commodity flow problems without capacity constraints on the arcs.}
Other methods rely on Lagrangian duality to transform the Splittable-CMCF problem into a \add{set of} problem\add{s} of smaller size, which can be solved through a cutting-plane method (e.g. \cite{Goffin1997}).

A survey of Splittable Convex Multi-Commodity Flow methods can be found in \cite{Ouorou2000}.
Unlike the methods described in this work, the survey focuses on a version of the Splittable-CMCF problem \remove{without arc capacities}.
More recently, \cite{fortz2017} proposed a method to solve the Unsplittable-CMCF problem, when the cost functions are increasing and piecewise-linear. This approach is \add{allow to solve the Unsplittable-CMCF problem on instances with \remove{efficient for}} a limited number of segments in the piecewise-linear function. 
In this work, we will consider a more general case where the cost function takes any form, as long as they are increasing convex~: cost functions that are not linear nor differentiable functions, and even non-explicit black-box functions can also be considered. This is a significant generalization compared to the methods described above, which either rely on the derivative of the cost functions or are limited to piecewise-linear functions. This work also generalizes previous convex-MCF methods as it allows integrating constraints such as limited arc capacities.

\paragraph{Contribution}


Our main contribution is to provide different methods for solving the Splittable and Unsplittable-CMCF problems. 
We provide a flexible linearization of the Splittable-CMCF problem, based on an inner-approximation of the cost functions. This linearization can integrate additional constraints and manage non-differentiable (or even black-box) objective cost functions. This linearization is compared to a compact formulation of the problem defined in Section \ref{sect:compact_formulation}, and an extended formulation in Section \ref{sect:convex_formulation}.
Other linearization approaches, via an outer approximation (Kelley's cut \cite{kelley1960}), have been implemented and tested, but their performances were significantly weaker than the inner approximation \add{on the numerical experiments described in \ref{sect:test_description}}~; for this reason, it is not presented in this work.
The three different approaches are compared on SNDLib instances from the literature \cite{orlowski2010}  in Section \ref{sect:perf_split_inner}.


We then consider the Unsplittable-CMCF problem in Section \ref{sect:unsplittable_problem}.
First, additional constrains tightening the Splittable-CMCF Problem are considered in Section \ref{sect:tight_inner_apprx}. Then a tighter relaxation of the Unsplittable-CMCF problem is presented in Section \ref{sect:patterns_problem}. This relaxation is obtained by considering patterns, or sets of commodities, that cross each arc of the network.
A branch-and-price algorithm exploiting these relaxations is described, and the performances are compared in Section \ref{sect:perf_unsplit}.

\section{Splittable-CMCF Problem Modeling: Compact and Extended Formulations}\label{sect:Split_Problem_modeling}

We first describe the Splittable-CMCF problem by providing its compact formulation. As stated above, the Splittable-CMCF problem is both relevant in itself and as a relaxation of the Unsplittable-CMCF problem that will be considered in Section \ref{sect:unsplittable_problem}

\subsection{Compact Formulation} \label{sect:compact_formulation}

The following compact formulation provides a straightforward representation of the Convex Multi-Commodity Flow problem.
Let us denote by $\delta^+(v) \subset A$ (resp. $\delta^-(v) \subset A$) the set of arcs whose heads (resp. tails) are the node $v \in V$. The tail of an arc $a \in A$ is denoted $tail(a)$.
For a commodity $k \in K$ and an arc $a \in A$, let variable $x_k^a \in [0,1]$ be the split ratio of commodity $k$ on arc $a$. Therefore $b_k x_k^a$ \add{(the product of the size of commodity $k$ and the share of this commodity crossing arc $a$)} is the quantity of flow of commodity $k$ crossing arc $a$. Let also variable $y_k \in [0,1]$ be the split ratio of the commodity $k$ not accepted on the network (if the commodity is fully accepted, $y_k=0$).

Two hierarchical objectives are combined: first the demands satisfaction and then the minimization of the sum of arc costs.
Therefore, the variables $y_k$ are weighted with a high value $M b_k$, with $M \in \mathbb{R}^+$ higher than $\sum_{a \in A} L_a$, where $L_a\remove{(c_a)}$ is the \add{L}\remove{l}ipschitz constant of the cost function $\pmb{r}_a$ $\forall a \in A$, to ensure that, for any optimal solution, $b_k y_k$ is minimal. 

Including a $\sum_{k\in K}M b_k y_k$ term in the objective ensures that an optimal solution of the CMCF Problem maximizes the load of accepted commodities in the network: If there exists a feasible solution that accepts a given share $\sigma$ of demands (with $\sigma=\sum_{k \in K} y_k b_k$), any solution that accepts a smaller share of the demands (with $\sum_{k \in K} y_k b_k > \sigma$)  will be suboptimal. 

The compact formulation of the Splittable-CMCF problem, $\mathcal{COMPACT}$, follows the form:
\begin{align}
 & \mathcal{COMPACT}: & \nonumber\\
\qquad & \min  \sum_{a \in A} \pmb{r}_a(\sum_{k \in K} b_k x_k^a  ) + \sum_{k \in K} M b_k y_k & \label{compact:obj}\\
\qquad& \sum_{a \in \delta^+(v)} x_k^a -\sum_{a \in \delta^-(v)} x_k^a = \begin{cases}
1-y_k & \text{if } v = s_k \\
y_k-1 & \text{if } v = t_k \\
0 & \text{otherwise}
\end{cases} & \forall v \in V, k \in K, \label{compact:const:paths_cons}\\ 
\qquad & \sum_{k \in K} b_k x_k^a  \leq  c_a   \quad & \forall a \in A, \label{compact:const:capa}\\
\qquad & x_k^a \geq 0 & \forall a \in A, k \in K,\\
\qquad & y_k \geq 0 & \forall k \in K, 
\end{align}
where the Objective function \eqref{compact:obj} minimizes the sum of all the increasing costs of the arcs, with $\pmb{r}_a$ a convex function for all arc $a \in A$. Constraints \eqref{compact:const:paths_cons} ensure flow conservation at each node, and Constraints \eqref{compact:const:capa} ensure the load of each arc is bounded by its capacity. 

In optimal solutions of the Splittable-CMCF problem, the variables $x_k^a$ are within $[0,1]$ for every commodity $k \in K$ and arc $a \in A$. Indeed, for a given solution, if for a $k \in K$ and an $a \in A$, $x_k^a > 1$, then the same solution with $x_k^a = 1$ is also feasible and has a lower objective value. Similarly, the variables $y_k$ are in $[0,1]$. 
The Unsplittable-CMCF problem restricts variables $x_k^p$ and $y_k$ to $\{0,1\}$. \add{One can also assume, without loss of generality, that no commodities share both their sources and destination. If needed, two such commodities could be replaced during a pre-processing by a single one with the same source and destination, and a bandwidth equal to the sum of the two original commodities' bandwidths.}

Although this compact approach provides a straightforward description of the Splittable-CMCF problem, its dimensionality quickly increases with the problem's size: the number of variables $|A| \times |K|+|K|$ and the number of constraints $|V|\times|K|+|A|$ lead to unpractical representation. 
On the other hand, extended formulation, described in the following sections, relies on a lower number of constraints, and, as variables are explicitly related to paths, they can lead to natural relaxations of the Unsplittable-CMCF problem described in Section \ref{sect:unsplittable_problem}. 
For these reasons, extended formulations of the problem are considered.

\subsection{Extended Formulation and Column Generation} \label{sect:convex_formulation}

The Splittable-CMCF problem can be written in an extended form: For every commodity $k \in K$, the set of paths from $s_k$ to $t_k$ is denoted $P^k$ and, for every path $p \in P^k$, a path variable $x_k^p$ represents the ratio of commodity $k$ using path $p$.

The extended formulation of the Splittable-CMCF problem can be formulated as follows:

\begin{align}
\mathcal{CONVEX}:\qquad  & \min \sum_{a\in A} \pmb 
{r}_a(\sum_{k \in K} \sum_{ p \in P^k | a \in p} b_k x_k^p) + \sum_{k \in K} M b_k y_k \\
\alpha_k:\qquad   & -\sum_{p \in P^k} x_k^p - y_k \add{+ 1 \leq 0} \qquad & \forall k \in K, \label{eq:convex:x_convex}\\
\beta_a:\qquad   &   \sum_{k \in K} \sum_{ p \in P^k | a \in p} b_k x_k^p \add{-c_a \leq 0}  \qquad & \forall a \in A, \label{eq:convex:capa}\\
 \qquad & -x_k^p \leq 0   \qquad & \forall k \in K, p \in P^k, \label{eq:convex:posit_kp}\\
 \qquad &  -y_k \leq 0  \qquad & \forall k \in K. \label{eq:convex:posit_k}
\end{align}
where Constraints \eqref{eq:convex:x_convex} ensure that, for a commodity $k \in K$, the accepted ratio over all paths, together with the non-accepted one, equals at least one. Constraints \eqref{eq:convex:capa} are the capacity constraints, \add{equivalent \remove{similar}} to Constraints \eqref{compact:const:capa} in the compact formulation of the problem. 
Note that this formulation does not rely on the $x_k^a$ variables used in the $\mathcal{COMPACT}$ formulation, but on the path variables.

The variables $\alpha_k$ and $\beta_a$, $\forall k \in K, a \in A$, are the dual variables associated with Constraints \eqref{eq:convex:x_convex} and \eqref{eq:convex:capa}. These dual variables are positive, since all constraints are inequalities \add{(note that this convex problem follows the standard form as defined in \cite{boyd2004}, section 4.1.1)}.

The number of $x_k^p$ variables is exponential, but this formulation can be solved iteratively: at each iteration, a restricted version of the $\mathcal{CONVEX}$ problem (called the Restricted Master Problem or RMP) is solved on a limited set of paths, and new paths to be added for the next iteration are generated via a pricing problem.
Let the variable $x_a=\sum_{k \in K}\sum_{  p \in P^k | a \in p} b_k x_k^p$ be the flow over the arc $a \in A$ and the variable values of an optimal primal-dual solution of the RMP are denoted by "$^*$".

The RMP is similar to the $\mathcal{CONVEX}$ problem above, except that, for all $k \in K$ and $p \in P^k$ with associated variables not included in the restricted problem, the corresponding $x_k^p$ variable is considered to equal $0$. In practice, these variables are not implemented, but in theory they can be considered together with an additional constraint $x_k^p \leq 0$ ; in order to define a pricing problem that will identify which variables $x_k^p$ must be added to the RMP, a dual variable $\epsilon_k^p$ is attached to the constraint $x_k^p \leq 0$ for all $k \in K$ and $p \in P^k$ such as $x_k^p$ has not been generated yet.

At an optimal primal-dual solution of the RMP, the Karush-Kuhn-Tucker optimality conditions ("KKT conditions") must hold true \cite{boyd2004}. In particular, the condition derived from the variable $x_k^p$ ensures that $\sum_{a \in p} b_k \pmb{r}'_a(x^*_a) - \alpha_k^* + \sum_{a \in p} \beta_a^* b_k + {\epsilon_k^p}^*= 0$, where $\pmb{r}'_a(x^*_a)$ is the derivative of $\pmb{r}_a(x^*_a)$ for all $a \in A$.

If, for some $k \in K$ and $p \in P^k$, ${\epsilon_k^p}^* = -\sum_{a \in p} b_k \pmb{r}'_a(x^*_a) + \alpha_k^* - \sum_{a \in p} \beta_a^* b_k \add{>} 0$, the constraint $x_k^p \leq 0$ is active, and the variable $x_k^p$, which is potentially non-null, needs to be introduced into the RMP.

This allows us to formulate the pricing problem as the following problem \add{to find paths associated with potentially non-null variables}:
\begin{equation*}
\text{find a path $p \in P^k$ with cost} \quad \sum_{a \in p} (\beta_a^* + \pmb{r}'_a(x^*_a)) < \frac{\alpha_k^*}{b_k}
\end{equation*}

This pricing problem can be easily solved in polynomial time, for every demand $k \in K$, by finding the shortest path from $s_k$ to $t_k$ in the graph $G = (V,A)$ where each link is weighted by $\beta_a^* + \pmb{r}'_a(x^*_a)$.
Note that this approach is close to the Simplicial Decomposition method presented in \cite{vonhohenbalken1975}, except for the fact that the pricing problem is derived from the KKT optimality conditions and not from a step of the Frank\remove{e}-Wolfe algorithm. This difference allows for greater flexibility, for instance \add{allowing non-differentiable cost functions or} integrating constraints such as the capacity constraint \eqref{eq:convex:capa} as weights in the shortest-path problem. 

\subsection{Problems Complexity} \label{sect:complexity_split}

The Splittable Linear MCF, similar to the Splittable-CMCF restricted to linear function, is well known to be a linear program and thus can be solved in polynomial time.

\add{ Both the $\mathcal{COMPACT}$ and the $\mathcal{CONVEX}$  formulations require one to solve a convex optimization problem, either directly the $\mathcal{COMPACT}$ problem or the Restricted Master Problem for the $\mathcal{CONVEX}$ Column Generation. 
Interior-point methods converge to the optimal solution of the Splittable-CMCF problem with a given precision in polynomial time, provided functions $\pmb{r}_a$ and its derivative $\pmb{r}'_a$, for all $ a \in A$. And as the pricing of the problem of the $\mathcal{CONVEX}$ formulation can be solved in polynomial time, the separation/optimization equivalence \cite{grotschel1993} allows concluding that both problems are polynomial for a given precision.}

\add{So one can conclude that, like the Splittable Linear MCF problem, both the $\mathcal{COMPACT}$ and the $\mathcal{CONVEX}$ problems can be solved in polynomial time.
}

\subsection{Numerical Tests description} \label{sect:test_description}

The formulations of the Splittabe-CMCF problem $\mathcal{COMPACT}$ and $\mathcal{CONVEX}$ have been implemented in C++ using CPLEX version 22.1.1 \add{; the Flow-Deviation algorithm has also been implemented in C++. All tests were run \remove{running}} on 2 cores with 64GB of RAM.
These implementations have been tested on instances of the SND library \cite{orlowski2010}. 
In all instances, for all links $(u,v) \in A$, a new arc $(v,u)$ is created with the same capacity as $(u,v)$ if it does not exist in the graph.
Each instance is scaled to ensure that its capacities are slightly above congestion for the Splittable-CMCF problem: first, a factor $\tau \in \mathbb{R}^+$, unique for each instance, is computed such that an instance with all capacities scaled by $\tau$ is feasible with all commodities accepted, i.e. $y_k=0$ for all $k \in K$, while an instance with capacities scaled by $\tau \times 0.99$ has no feasible solution with all commodities accepted. This factor $\tau$ can be computed by dichotomy, solving a sequence of Splittable Linear MCF problems (which are simple LPs, as seen in Section \ref{sect:complexity_split}). 

In order to ensure that multiple solutions that accept all commodities exist, the capacities computed above are then increased by a factor of 1.05 to create the instances used in the numerical experiments.

Two cost functions were considered: 
\begin{itemize}[left= 20pt, topsep=3pt]
    \item a Kleinrock cost function $\pmb{r}_a(x_a)=f_a/(d_a - x_a)$, where $d_a=1.01 c_a$ to ensure finite costs when $x_a \in [0,c_a]$,
    \item a quadratic cost function $\pmb{r}_a(x_a)=f_a \times x_a^2$.
\end{itemize} 

For all arcs $a \in A$, the factors $f_a$ were calculated to ensure that $\pmb{r}_a(c_a)$ matched the arc costs when provided by the libraries: this allows creating instances with a different cost on each arc. For instances in which the SND library does not provide any cost or capacities, a constant value is used on all arcs.  Figure \ref{fig:cost_functions} represents these functions.
\add{Constraints linking the cost of each arc with its load were expressed either as Second-Order cone constraints - for the Kleinrock cost function - or as quadratic constraints - for the Quadratic cost function. As the Flow-Deviation algorithm cannot solve capacitated multi-commodity flow problems, the capacity constraints are removed when testing this particular algorithm and, for tests using the Kleinrock cost function, a modified objective function similar to the one presented in Figure 1 of \cite{Ouorou2000} was used. These change did not  impact the optimal solution of tests on instances on the Kleinrock cost function, but had a very significant solution on tests on instances on the Quadratic loss function - where optimal results often displayed loads above the instances' arcs capacities.}

\begin{figure}[ht]
    \centering
    \includegraphics[width=.6\textwidth]{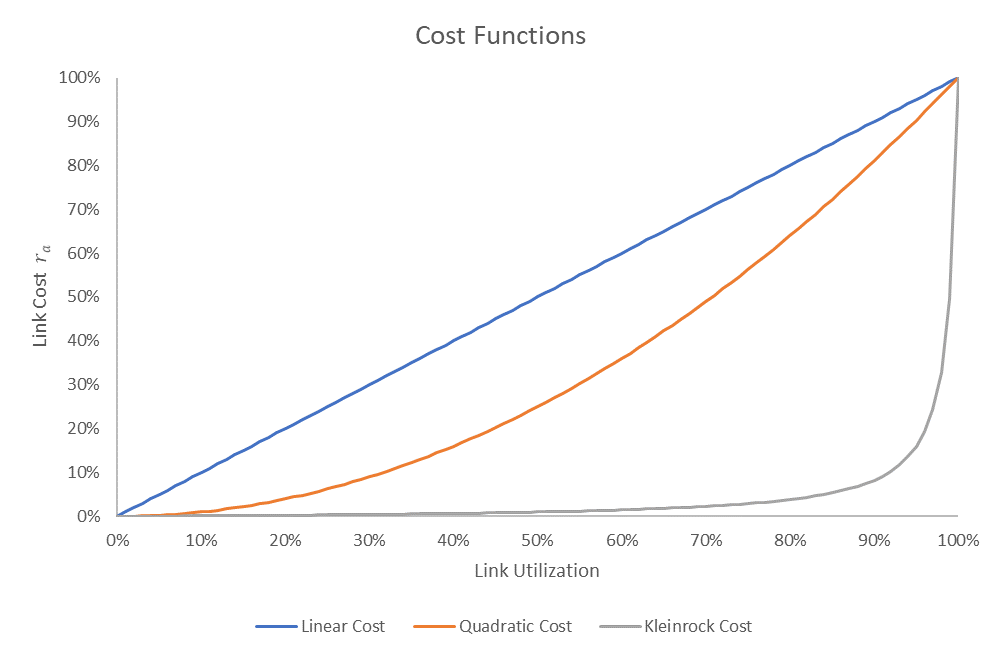}
    \caption{Cost function used in numerical experiments: Linear, Quadratic and Kleinrock costs. }
    \label{fig:cost_functions}
\end{figure}

Figure \ref{fig:arcs_usage_cumul} shows the cumulated distributions of the arcs available capacities $c_a - x_a$ for an optimal solution of the ZIB54 instance, optimized with 3 different cost functions. The routing minimizing convex costs avoids high usage of arcs, except when this high usage is necessary to ensure the feasibility of the problem. In this example, over $10\%$ of the arcs are fully used (i.e. they have available capacity of $0\%$) in a solution minimizing linear costs, compared to less than $3\%$ when minimizing the quadratic costs and zero when minimizing the Kleinrock function.

This observation is consistent with the marginal costs of the tested arc cost functions. When using linear functions, the marginal cost is constant, while it increases with the arc utilization for quadratic cost functions and increases very steeply for high utilization for Kleinrock functions.

Likewise, quadratic cost functions have a marginal cost equal to zero when the link utilization is null (or the available capacity equals $100\%$). For this reason, one can observe that in the instance described in Figure \ref{fig:arcs_usage_cumul} no link has an available capacity of $100\%$ for quadratic cost functions, while they do for Kleinrock or linear cost functions.

\begin{figure}[ht]
    \centering
    \includegraphics[width=.6\textwidth]{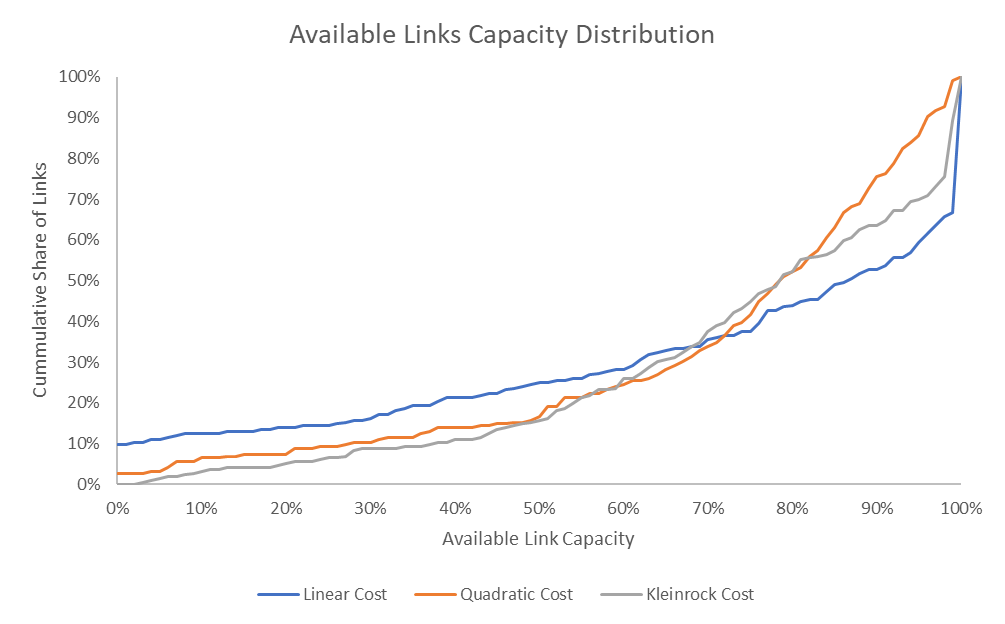}
    \caption{Example of cumulated distribution of available capacity of arcs on the SNDLib instance ZIB54 \add{(|K=1501|, |A|=160)}, with optimal routing minimizing linear, quadratic or Kleinrock cost functions.
    }
    \label{fig:arcs_usage_cumul}
\end{figure}

\subsection{Performances} \label{sect:perf_split_compact_convex}

Performance charts of the optimization approaches described in Sections \ref{sect:compact_formulation} and \ref{sect:convex_formulation}, showing the share of SNDLib instances solved to a precision of $0.1\%$ in a given time, are presented in Figure \ref{fig:perf_chart_compconv}. 
The $\mathcal{CONVEX}$ column generation method is shown to be significantly more efficient than direct optimization of the $\mathcal{COMPACT}$ formulation on all instances. Note that the durations are represented with a logarithmic scale, so the differences between the two methods are very significant, with factors from $10$ to $40$ for all the larger instances. This gain seems to be on the same scale for quadratic or Kleinrock cost functions, and grow with the size of the instance: $\mathcal{COMPACT}$ formulation can't be solved for the larger instances due to memory limits. \add{Finally, the Flow-Deviation algorithm proved less efficient than the compact approach when tested on instances with the Kleinrock function; on tests with quadratic cost functions, performances comparisons were not relevant as the optimization problem did not include arc capacities and led to optimal solutions widely different from the Splittable-CMCF problem tested with the other approaches.}

The Column Generation approach performances are, however, still limited by the need to solve, for each iteration, a convex, nonlinear, Restricted Master Problem (RMP), which can be computationally expensive. This effect is limited for quadratic (resp.  Kleinrock) cost functions as the RMP can be expressed as a quadratic program (resp. a Second-Order Cone Program), but could impact the $\mathcal{CONVEX}$ column generation performances if such formulations did not exist. Furthermore, the pricing problem requires computing the gradient $\pmb{r}'_a$ of the cost functions $\pmb{r}_a$ for all arcs $a \in A$, which may not be defined.
For these reasons, in the following sections, we will consider other column generation approaches based on linearization of the problem.

\begin{figure}[ht]
    \centering
    \begin{subfigure}{.45\textwidth}
    \includegraphics[width=1\linewidth]{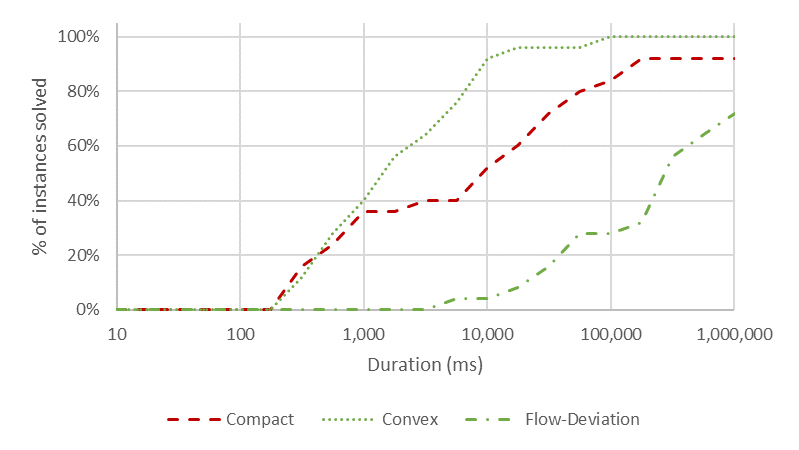}
    \caption{Kleinrock cost functions}
    \end{subfigure}
    \begin{subfigure}{.45\textwidth}
    \includegraphics[width=1\linewidth]{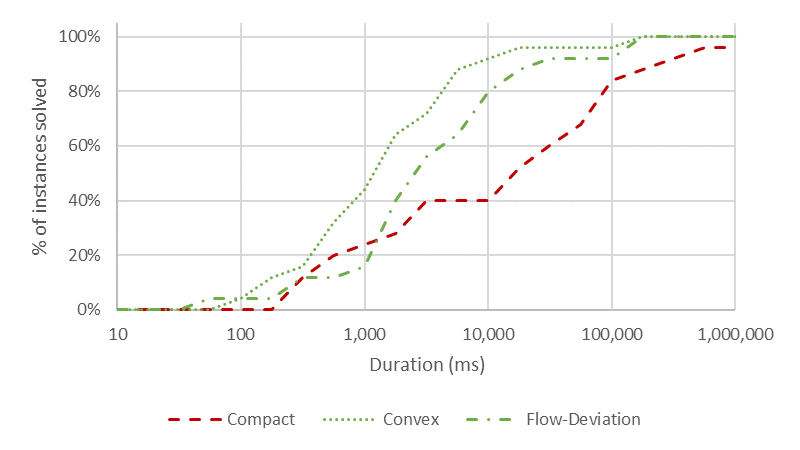}
    \caption{Quadratic cost functions}
    \end{subfigure}
    \caption{Performance Chart of the Splittable-CMCF problem \add{on instances of the SNDLib} with 2 cost functions \add{; the problem tested for the Flow-Deviation algorithm is a relaxation of the Splittable-CMCF problem with uncapacitated arcs.}}
    \label{fig:perf_chart_compconv}
\end{figure}

\FloatBarrier
\section{Splittable-CMCF Problem:  Inner Approximation Formulation}
\label{sect:inner_approx}

The decomposition method described above requires to solve, at each iteration, a nonlinear Restricted Master Problem (RMP), which can be time consuming for large instances. Furthermore, this approach relies on the KKT optimality conditions to derive the pricing problem, which means that the cost functions need to be differentiable. 

An outer-approximation approach, where the cost functions are linearized by a set of constraints generated by a cutting plane algorithm, has been considered. However, this method simultaneously generates a large number of path variables and cuts, leading to very large constraint matrices. In practice, it leads to a very large restricted master problem, and proves inefficient to solve the Splittable-CMCF problem.

\subsection{The Inner Approximation Formulation}

For this reason, it is relevant to consider solving the problem by approximating, for every arc $a \in A$, the cost functions $\pmb{r}_a$ with polytopes of vertices $(c_a^i, \pmb{r}_a(c_a^i))$, with index $i \in N=\{1,...,n\}$ \add{which can be generated through a Column-Generation scheme, as described below}. Each point $(x_a, \pmb{r}_a(x_a))$ can then be approximated by a convex combination of the vertices of the polytopes, weighted by a new set of variables $z_a^i \in \mathbb R^+$. As, for a given arc $a \in A$, the set of variables $z_a^i$ are the weights of a convex combination, we have $\sum_{i \in N} z_a^i=1$, $\sum_{i \in N} z_a^i c_a^i$ representing the usage of arc $a$ and $\sum_{i \in N} z_a^i \pmb{r}_a(c_a^i)$ the cost associated with arc $a$.

Figure \ref{fig:epigraph_approx} provides an illustration of the Inner-Approximation, where a cost function's epigraph (in light green) is approximated by a 4-vertices polyhedron (in red - vertices $c_a^i$ are shown as red crosses). The possible values of $(x, \pmb{r}_a(x))$ of the epigraph are approximated as convex combinations of vertices $\{(c_a^i, \pmb{r}_a(c_a^i))\}$ of the polyhedron. 
New vertices $\{(c_a^i, \pmb{r}_a(c_a^i))\}$ of the polyhedron are added iteratively through a pricing problem, leading to a closer approximation of the cost function $\pmb{r}_a$.

Note that the polyhedral approximation does not exactly match the cost function's epigraph. Points above the line $(c_a^0, \pmb{r}_a(c_a^0))-(c_a^3, \pmb{r}_a(c_a^3))$ are necessarily suboptimal for the Objective \eqref{eq:inner_obj}. And, even when the column generation process converges to an optimal solution, some points $(x, \pmb{r}_a(x))$ are not in the polyhedron.

\begin{figure}[ht]
    \centering
    \includegraphics[trim={0 3.25cm 0 2.5cm},clip,width=0.6\linewidth]{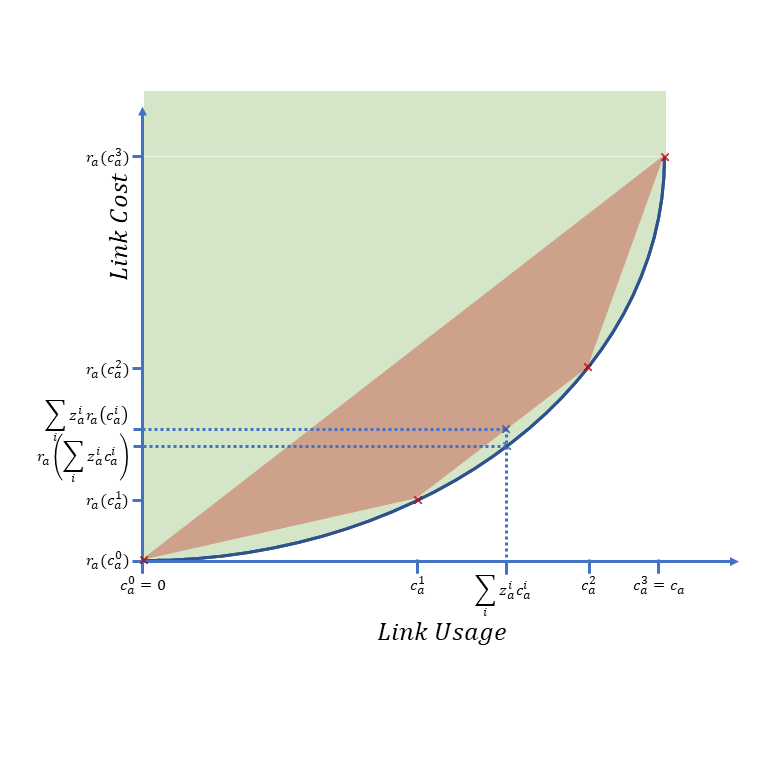}
    \caption{Example of cost function $r_a$ and an inner-approximation of its epigraph.}
    \label{fig:epigraph_approx}
\end{figure}

This approximation effectively linearize the cost functions $\pmb{r}_a$ by generating new points $(c_a^i, \pmb{r}_a(c_a^i))$ and associated variables $z_a^i$ .
This inner approximation formulation of the Splittable-CMCF problem, $\mathcal{INNER}$, can be expressed as:
\begin{align}
\mathcal{INNER}: \qquad &    \min \sum_{a \in A} \sum_{i \in N}  \pmb{r}_a(c_a^i) z_a^i + \sum_{k \in K} M b_k y_k \label{eq:inner_obj}\\
\alpha_k : \qquad & - \sum_{p \in P^k} x_k^p - y_k \leq -1 \qquad & \forall k \in K,  \label{eq:inner_demand_convexity}\\
\beta_a : \qquad  &  \sum_{k \in K} \sum_{ p \in P^k | a \in p} b_k x_k^p - \sum_{i \in N} c_a^i z_a^i    \leq 0 \qquad &\forall a  \in A, \label{eq:inner_usage_arc}\\
\gamma_a :\qquad & \sum_{i \in N} z_a^i = 1 \qquad & \forall a \in A, \label{eq:inner_polytop_convexity}\\
    & x_k^p \geq 0, y_k \geq 0 \qquad & \forall k \in K,p \in P^k, \label{eq:inner_x_y_def}\\
    & z_a^i \geq 0 \qquad & \forall a \in A, i \in N,  \label{eq:inner_z_def}
\end{align}
where the objective function \eqref{eq:inner_obj} is represented as a sum of convex combinations of costs ; note that the scalars $c_a^i$ are constant, so the $\mathcal{INNER}$ problem is linear. 
Constraints \eqref{eq:inner_demand_convexity} ensure that demands are accepted whenever possible, Constraints \eqref{eq:inner_polytop_convexity} define variables $z_a^i$ as a convex combination of polytopes vertices, and Constraints \eqref{eq:inner_usage_arc} link the path usage variables $x_k^p$ to the convex representation of the usages of the arcs $\sum_{i \in N} c_a^i z_a^i$.

Variables $\alpha_k$, $\beta_a$ and $\gamma_a$ are the dual variables associated with constraints \eqref{eq:inner_demand_convexity}, \eqref{eq:inner_usage_arc} and \eqref{eq:inner_polytop_convexity} respectively.
The dual problem associated with $\mathcal{INNER}$ allows us to formulate two pricing problems. Indeed, new paths and new inner-approximation vertices are dynamically generated from an optimal solution of the Restricted Master Problem (denoted by $^*$) by solving the pricing problems:
\begin{align}
    \text{Find new path $p$ in $P^k$ such as: } \quad & \sum_{a \in p} \beta_a^* < \sfrac{\alpha_k^*}{b_k}  \\
    \text{Find $c_a^i \in [0,c_a]$ such as: } \quad&    \gamma_a^* < \beta_a^* c_a^i - \pmb{r}_a(c_a^i) \label{eq:inner_vertices_pricing}
\end{align}
This approach requires to generate dynamically two distinct sets of columns, one set related to the paths and one other related to the vertices of the cost functions. 
The pricing problem used to generate new path variables is a shortest path problem, very similar to the one solved in the $\mathcal{CONVEX}$ formulation, with arcs weighted by the optimal values of the $\beta_a$ variables in the RMP.
Likewise, new vertices of the inner approximation can be generated through a simple one-dimensional convex optimization problem. If derivatives of the arc cost functions are available, the pricing problem can be solved explicitly. For example, for quadratic costs $\pmb{r}_a(x_a)=f_a  x_a^2$, a new vertex $c_a^i = \frac{\beta_a^*}{2 f_a}$ is generated if $\gamma_a^* < \beta_a^* c_a^i - f_a {c_a^i}^2 = \frac{{\beta_a^*}^2}{4 f_a}$ ; for a Kleinrock function $\pmb{r}_a(x_a)=f_a/(d_a - x_a)$, a new vertex $c_a^i = d_a - \sqrt{\frac{f_a}{\beta_a^*}}$ is generated if $\beta_a^* d_a - 2 \sqrt{\beta_a^* f_a} > \gamma_a^*$.
If the derivative of the cost functions is not explicit, the problem can still be solved by a dichotomic search. 

For the $\mathcal{COMPACT}$ and the $\mathcal{CONVEX}$ problems of the previous section, it can be easily shown that the $\mathcal{INNER}$ problem can be solved with a given precision in polynomial time with the proposed column generation approach.
Despite the higher number of variables, the $\mathcal{INNER}$ Column Generation approach is more efficient than the $\mathcal{CONVEX}$ Column Generation algorithm as the Restricted Master Problem is linear, and more flexible as it allows the use of non-differentiable cost functions.

The formulation $\mathcal{INNER}$ only computes the values of the cost function in the pricing problem of the vertices, which can be solved by computing a few values of the cost function. This means that the complexity of computing the values of the pricing functions does not impact the global cost of solving the problem.
\add{As this formulation does not rely on the cost function's derivatives, it can be directly applied to non-differentiable function as long as it can be evaluated and a solver of the pricing problem is provided. Figure \ref{fig:epigraph_nondif} provides an example of such a function - for this instance cost functions of the shape $\pmb{r}_a(x_a)=f_a \max((1-\cos(x_a/c_a)), (x_a/c_a)^2-0.1, 2 e^{x_a/c_a}-4)$ were used. Note this function is convex in $[0,c_a]$ and non-differentiable in $2$ points.}

\begin{figure}[ht]
    \centering
    \includegraphics[trim={0 0.4cm 0 0.2cm},clip,width=0.6\linewidth]{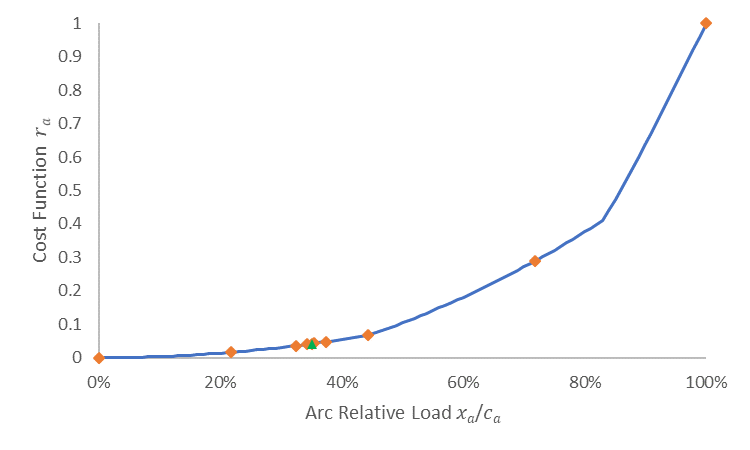}
    \caption{\add{Example of convex, increasing, non-differentiable cost function, in blue, and vertices of a possible inner-approximation in orange.}}
    \label{fig:epigraph_nondif}
\end{figure}

The following remark can be derived from the convexity of the $\pmb{r}_a$ functions:
\begin{remark}
    There always exists an optimal solution of the $\mathcal{INNER}$ problem with, for every arc $a \in A$, a single variable $z_a^i = 1$.
    Furthermore, if the function $\pmb{r}_a$ is strongly convex, then for any optimal solution of $\mathcal{INNER}$ and the arc $a \in A$, a single variable of $\{z_a^i\}_{i \in N}$ is non-null for any optimal solution \add{, as strong convexity would imply that a convex combination of two or more vertices $(x_a, \pmb{r}_a(x_a))$ is strictly above the function $\pmb{r}_a$. }
\end{remark}

\subsection{Extension to non-convex cost functions}

When applied to any continuously increasing, potentially nonconvex, cost function $\{\pmb{r}_a\}_{a \in A}$, the optimization approach described in Section \ref{sect:inner_approx} converges to the solution of a CMCF problem, where the cost functions are the convex \add{hulls \remove{envelopes}} of the used cost functions.

\begin{proposition}
    The Column Generation algorithm associated to the $\mathcal{INNER}$ problem applied to a given graph $G=(V,A)$ and a set of commodities $K$ with continuously increasing non-convex functions $\{\pmb{r}_a\}_{a \in A}$, or for their convex \add{hulls \remove{ envelopes}} $\{env(\pmb{r}_a)\}_{a \in A}$ converges to the same solution.
\end{proposition}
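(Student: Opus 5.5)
The approach is to show that the two column generation runs, one with $\pmb{r}_a$ and one with $env(\pmb{r}_a)$, are driven by the same data. They should produce the same restricted master problems up to columns that never matter, the same dual values, and the same stopping test. Hence they share the same limit, namely the optimum of $\mathcal{INNER}$ with convex costs $env(\pmb{r}_a)$. I will fix, for each arc, $g_a := env(\pmb{r}_a)$. This is the largest convex function below $\pmb{r}_a$ on $[0,c_a]$. Its epigraph is the closed convex hull of the epigraph of $\pmb{r}_a$, and $g_a$ is continuous and nondecreasing since $\pmb{r}_a$ is.

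\textbf{Step 1: the vertex pricing problems coincide.} For a dual value $\beta_a^*$, the vertex pricing \eqref{eq:inner_vertices_pricing} computes
\[
\max_{c\in[0,c_a]} \beta_a^* c - \pmb{r}_a(c),
\]
which is the conjugate $\pmb{r}_a^*(\beta_a^*)$ restricted to $[0,c_a]$. A standard fact is that a function and its convex hull have the same conjugate, so this maximum equals $\max_c \beta_a^* c - g_a(c)$. Moreover, some maximizer $c^\star$ of the non-convex problem satisfies $\pmb{r}_a(c^\star)=g_a(c^\star)$. The maximum of a linear function over the epigraph is attained at an extreme point of the convex hull, and extreme points of the hull of the epigraph are points of the graph of $\pmb{r}_a$ itself. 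So a vertex generated by the nonconvex pricing is always a contact point where both functions agree. When run on $g_a$, we may choose (tie-breaking) the same contact point. Thus the two pricing oracles return the same columns $(c_a^i,\pmb{r}_a(c_a^i))=(c_a^i,g_a(c_a^i))$, and they declare ``no improving column'' under the same condition $\gamma_a^* \ge \max_c \beta_a^* c - g_a(c)$.

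\textbf{Step 2: induction on iterations.} I will start both algorithms from the same initial columns, chosen among contact points such as $c=0$ and $c=c_a$; if needed, initial vertices where $\pmb{r}_a>g_a$ can be argued to be dominated. The path pricing depends only on $\alpha^*,\beta^*$. Assuming the RMPs are identical at iteration $t$, they have the same primal–dual solution. By Step 1 and the shared shortest-path pricing, both algorithms add the same columns, so the RMPs are identical at iteration $t+1$. The termination criterion is the same for both, so the two runs stop simultaneously on the same RMP. By LP duality, this RMP's optimum equals that of the full $\mathcal{INNER}$ problem with costs $g_a$, i.e.\ the convex CMCF with $env(\pmb{r}_a)$.

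\textbf{Main obstacle.} The delicate part is Step 1: showing that the nonconvex maximizer always lies on the contact set $\{\pmb{r}_a=g_a\}$, and handling ties and initial columns off that set. For the first point I would argue directly. If $\pmb{r}_a(c^\star)>g_a(c^\star)$, then $g_a(c^\star)$ is a convex combination of values $\pmb{r}_a(u),\pmb{r}_a(w)$ at points $u<c^\star<w$, using compactness and continuity of $\pmb{r}_a$. Linearity of $\beta c$ then gives that one of $u,w$ achieves at least the same pricing value. For initial or tied columns with $\pmb{r}_a(c)>g_a(c)$, I would show they are dominated in the RMP by combinations of contact-point columns with the same $c$ and lower cost, so they carry no weight at optimality and do not affect the dual solution used for pricing.
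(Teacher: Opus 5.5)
Your proposal is correct and follows the paper's proof. The paper likewise reduces everything to showing that the vertex pricing problem \eqref{eq:inner_vertices_pricing} returns the same column for $\pmb{r}_a$ and for $env(\pmb{r}_a)$, using $\min f = \min env(f)$ for $f(c) = -\beta_a^* c + \pmb{r}_a(c)$, and it leaves implicit the induction over iterations that you spell out. The only difference is that the paper fixes the tie-break canonically as the largest maximizer, $\max \arg\min f = \max \arg\min env(f)$, which is automatically a contact point. This makes your aside about non-contact columns being dominated unnecessary; as stated, that aside would not hold at early iterations, because the contact columns needed to dominate them need not yet be in the RMP.
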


\begin{proof}
    It is sufficient to prove that for a given arc $a$ and dual values $\gamma_a^*$ and $\beta_a^*$, the Pricing Problem \eqref{eq:inner_vertices_pricing} provides the same solution $c_a^i$ with a cost function $\pmb{r}_a$ or $env(\pmb{r}_a)$.

    Observe that, for any continuous function $f: [l, r] \longrightarrow \mathbb R$ with $(l,r)\in \mathbb{R}^2$ and $l < r$, we have $\min f = \min~env(f)$ and $x = \max~argmin~f = \max~argmin~env(f)$.  Taking $f(c)=-\beta_a^* c + \pmb{r}_a(c)$ for $c \in [0,c_a]$, we have $env(f)(c)=-\beta_a^* c+ env(\pmb{r}_a)(c)$. Applying the Pricing Problem \eqref{eq:inner_vertices_pricing} to a cost function $\pmb{r}_a$ returns $\max~argmax~ \beta_a^* c - \pmb{r}_a(c)= \max~argmin~ -\beta_a^* c+ \pmb{r}_a(c)= \max~argmin~ -\beta_a^* c+env(\pmb{r}_a)(c)$, which is the same value as applying the Pricing Problem \eqref{eq:inner_vertices_pricing} to a cost function $env(\pmb{r}_a)$.
\unskip\nobreak\hfill $\square$
\end{proof}
Figure \ref{fig:epigraph_nonconvex} provides an illustration of such an approximation, where a non-convex cost function (in blue) is approximated \add{by a polyhedron with vertices displayed as orange diamonds. The value of the optimal solution is displayed as a green triangle. Notice that only a limited number of vertices, all on the frontier of the convex envelope of the function's epigraph, have been generated by the pricing problem \eqref{eq:inner_vertices_pricing}. For instance, for this particular arc, no vertex has been generated between points $(0,0)$ and $(0.55, \pmb{r}_a(0.55 c_a))$ The cost-function used in the instance displayed was $\pmb{r}_a(x_a)=x_a/c_a + 2 (x_a/c_a)^2 + \sqrt{x_a/c_a} \sin(20 x_a/c_a) /4$.
\remove{by its convex  envelope (in orange) ; the inner-approxmation polyedron generated by the pricing problem \eqref{eq:inner_vertices_pricing} is displayed in light red.}
}

\begin{figure}[ht]
    \centering
    \includegraphics[trim={0 0.4cm 0 0.2cm},clip,width=0.6\linewidth]{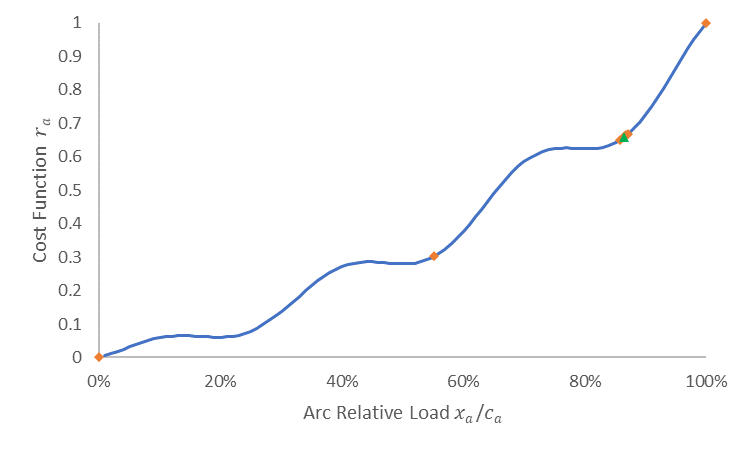}
    \caption{Example of non-convex cost function with an increasing convex hull, in blue, and vertices of a possible inner-approximation in orange.}
    \label{fig:epigraph_nonconvex}
\end{figure}

\FloatBarrier

\subsection{Performances}\label{sect:perf_split_inner}

Figure \ref{fig:relaxed_perfs} shows performance graphs of the 3 different formulations for the SNDLib instances with quadratic and Kleinrock cost functions. 
As we saw in Section \ref{sect:perf_split_compact_convex}, directly solving the $\mathcal{COMPACT}$ formulation is less efficient than the $\mathcal{CONVEX}$ column generation approach. It also appears that the $\mathcal{INNER}$ formulation, which linearizes the optimization problem, is significantly faster than the $\mathcal{CONVEX}$ formulation that directly handles convex optimization problems.
This behaviour is true for all instances, with the solving times divided by $2$ on most instances (note that the time follows a logarithmic scale in the performance graphs).

This performance gain is obtained by avoiding the resolution of convex optimization problems for the Restricted Master Problems in the $\mathcal{INNER}$ approach. As paths and cost-approximation vertices are generated within the same iterations in the $\mathcal{INNER}$ column generation scheme, the linearization of the Restricted Master Problem does not lead to a significantly higher number of Restrict Master Problem resolutions. The slower solving of a convex RMP consequently leads to a longer overall time for the $\mathcal{CONVEX}$ formulation than for the $\mathcal{INNER}$ one.
These higher performances, combined by its flexibility (as non-derivable convex cost functions can be integrated in the problem with no efforts) motivate the use of the inner approximation column generation for the Splittable-CMCF problem.

\begin{figure}[ht]
    \centering
    \begin{subfigure}{.45\textwidth}
    \includegraphics[width=1\linewidth]{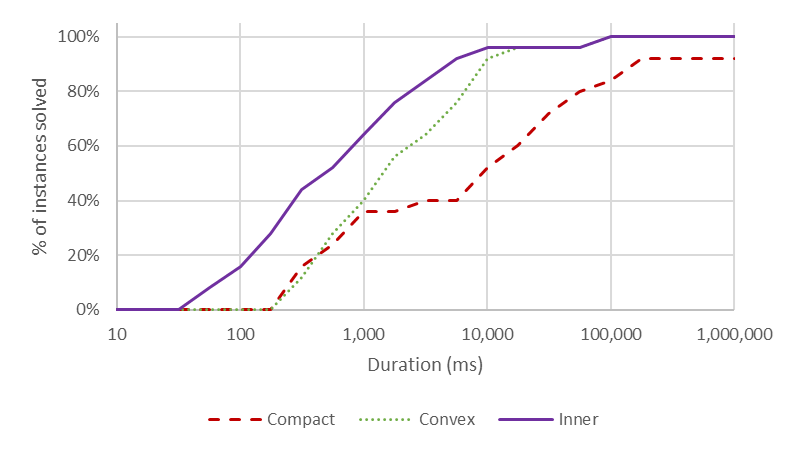}
    \caption{Kleinrock cost functions}
    \end{subfigure}
    \begin{subfigure}{.45\textwidth}
    \includegraphics[width=1\linewidth]{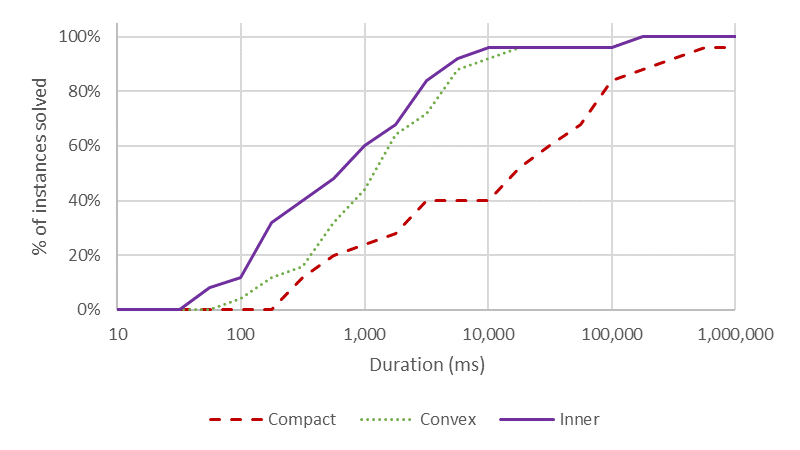}
    \caption{Quadratic cost functions}
    \end{subfigure}
    \caption{Performance Chart of the Splittable-CMCF problem with two cost functions, comparing the $\mathcal{COMPACT}$, $\mathcal{CONVEX}$ and $\mathcal{INNER}$ approaches.}
    \label{fig:relaxed_perfs}
\end{figure}

\FloatBarrier

\section{Unsplittable-CMCF Problem}\label{sect:unsplittable_problem}

The Unsplittable Convex Multi-Commodity Flow problem (Unsplittable-CMCF problem) is similar to the Splittable-CMCF problem described in Section \ref{sect:Split_Problem_modeling}, except that each commodity $k \in K$ must follow a single path from its source $s_k \in V$ to its target $t_k \in V$. Furthermore, if a commodity is accepted in the network, it must be fully accepted.

\subsection{Problem Description}

Unsplittable versions of the $\mathcal{COMPACT}$, $\mathcal{CONVEX}$ and $\mathcal{INNER}$ problems can be defined as follows:
\begin{align*}
 \mathcal{U-COMPACT}: \qquad & \min \sum_{a \in A} \pmb{r}_a(\sum_{k \in K} x_k^a b_k ) + \sum_{k \in K} M b_k y_k \\
&\eqref{compact:const:paths_cons}, \quad  \eqref{compact:const:capa}, \\
& x_k^a \in \{0,1\}  \quad \forall a \in A, k \in K, \qquad y_k \in \{0,1\} \quad   \forall  k \in K.
\end{align*}
\begin{align*}
\mathcal{U-CONVEX}: \qquad  & \min_{x_k^p, y_k } \sum_{a\in A} \pmb {r}_a(\sum_{k \in K} \sum_{ p \in P^k | a \in p} b_k x_k^p) + \sum_{k \in K} M b_k y_k \\
& \eqref{eq:convex:x_convex}, \quad \eqref{eq:convex:capa}, \\
& x_k^p \in \{0,1\}  \quad  \forall k \in K, p \in P^k, \qquad
  y_k \in \{0,1\}  \quad \forall k \in K. 
  \end{align*}
\begin{align*}
\mathcal{U-INNER}: \qquad  & \min \sum_{a \in A} \sum_{i \in N} z_a^i \pmb{r}_a(c_a^i) + \sum_{k \in K} M b_k y_k \\
& \eqref{eq:inner_demand_convexity}, \quad \eqref{eq:inner_polytop_convexity}, \quad \eqref{eq:inner_usage_arc}, \quad \eqref{eq:inner_z_def},\\
& x_k^p \in \{0,1\} \quad  \forall k \in K, p \in P^k, \qquad
  y_k \in \{0,1\} \quad \forall k \in K. 
  \end{align*}
\begin{remark} \label{rem:unique_var_at_one}
For any optimal solution of $\mathcal{U-CONVEX}$, exactly one of the variables $\{{y_k}\} \cup \{{x_k^p}\}_{p \in P^k} $ equals one, while the others equal zero.
\end{remark}
This remark allows to notice that optimal solutions of the formulations above are indeed unsplittable flows between the commodities sources and destinations, whenever such a solution exists; which motivates the Unsplittable-CMCF problem name. It also holds for the Unsplittable version of the problem $\mathcal{U-INNER}$.

\subsection{Complexity} \label{sect:complexity_unsplit}

The Unsplittable-CMCF problem is $\mathcal{NP}-hard$ ; for example, \cite{kleinberg1996} shows that the bin-packing problem can be reduced to an Unsplittable Multi-Commodity Flow problem (with linear costs) on a graph with only 2 nodes, proving the complexity of the Unsplittable Linear MCF problem, which is a special case of the Unsplittable-CMCF problem.

However, the Unsplittable Convex Single Commodity Flow problem, which is a special case of the Unsplittable-CMCF problem where $|K|=1$ can be shown to be polynomial (if the functions $\pmb{r}_a$ can be computed in polynomial time for all arcs $a \in A$).
Indeed, if we consider a graph $(V,A)$ with arc costs $\pmb{r}_a$ for all $a \in A$ and capacities $c_a \in \mathbb{R}^+$, and a single commodity of demand $b \in \mathbb{R}^+$ with source node $s \in V$ and destination node $t \in V$, finding an optimal solution for the Unsplittable Convex Single Commodity Flow problem is equivalent to solving a shortest-path problem on a graph $(V,A')$, with $A'$ the subset of arcs $A$ with capacity $c_a$ greater than $b$, weighted by the cost $\pmb{r}_a(b)-\pmb{r}_a(0)$. As the shortest-path problem is polynomial, so is the Unsplittable Convex Single Commodity Flow problem.

\subsection{The Splittable-CMCF as a relaxation of the Unsplittable-CMCF}

Note that the  $\mathcal{INNER}$ formulation of the Splittable-CMCF problem can be converted to an Unsplittable-CMCF problem by extending the domain of $y_k$ and $x_k^p$ from $\{0,1\}$ to $\mathbb R^+$ for all $k \in K$ and $p \in P^k$, as in the formulation $\mathcal{CONVEX}$.
However, while the Splittable-CMCF problem is a relaxation of the Unsplittable-CMCF problem, it proves in practice to be a very weak one. This weakness is due to the convex nature of the arc cost functions. For instance, for quadratic cost functions, the cost of a path with very low traffic will be very low, which leads to a high dispersion of the flow on the arcs and thus the paths. This effect leads to a large number of fractional values of the $x_k^p$ variables in the optimal solution, as illustrated in Figure \ref{fig:path_var_hist} on the instance Germany50 of the SNDLib. When a linear cost function is optimized, $98\%$ of the path variables are in $\{0,1\}$ while, when quadratic cost functions are optimized, this share falls to $90\%$ of the total.

We will see in Section \ref{sect:tight_inner_apprx} and \ref{sect:patterns_problem} different inequalities or reformulations that allow tightening the problem and generate less fractional solutions.

\begin{figure}[ht]
\centering
\includegraphics[width=.6\textwidth]{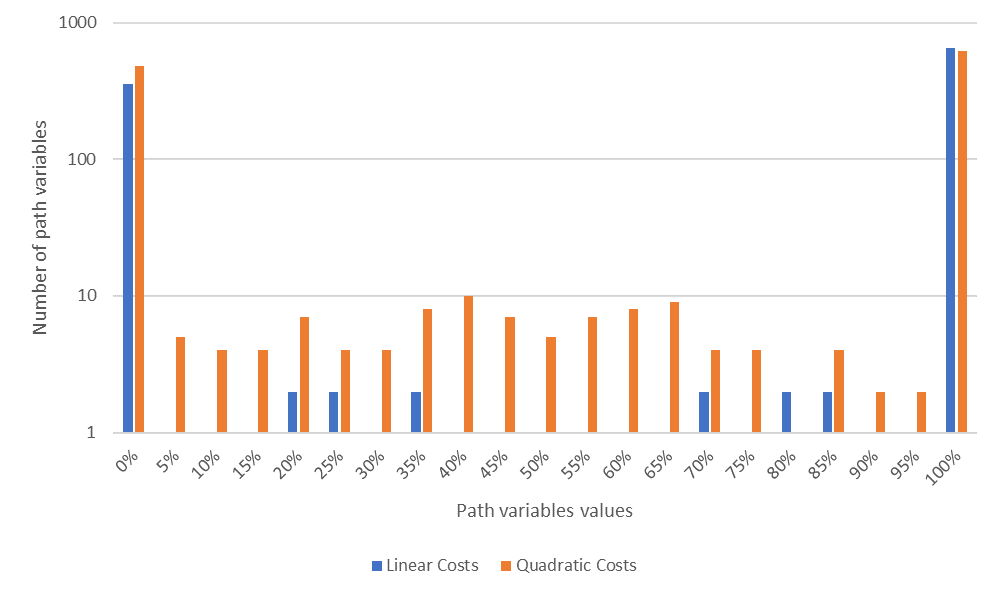}
    \caption{Histogram of the $x_k^p$ path variables values, for an optimal solution of the Splittable-CMCF problem on instance Germany50 \add{(|A|=176, |K|=662)}} 
    \label{fig:path_var_hist}
\end{figure}

\FloatBarrier

\subsection{Branching Strategy}\label{sect:branching}

In order to solve the Unsplittable-CMCF problem, one can embed the Splittable-CMCF problem described in the previous sections into a Branch-and-Price method to compute optimal integer solutions.
We will consider a branching tree on the extended formulations of the problem (either $\mathcal{CONVEX}$ or $\mathcal{INNER}$ formulations).
While the $\mathcal{CONVEX}$ or $\mathcal{INNER}$ formulations rely on the path variables, the branching will be applied to the underlying variables of the $\mathcal{COMPACT}$ formulation: a branch is defined by the possibility, for a given commodity $k \in K$, to be fully accepted or not in the network ($y_k=0$ or $y_k=1$) and another branching defines if the commodity crosses an arc $a \in A$ or not ($x_k^a > 0$ or $x_k^a=0$). The branching on the commodity acceptance is applied first and, if for all $k \in K$, $y_k \in \{0,1\}$ the branching on the \add{$x_k^a$\remove{$x_k^p$ and $y_k$}} variables is applied.

For a given node of the branching tree, the optimal solution of the relaxed problem is considered. If a commodity $k \in K$, \remove{with}\add{has} at least one associated variable from $\{{y_k}\} \cup \{{x_k^p}\}_{p \in P^k} $ in $]0,1[$, then the following branching rules can be applied. Two branchings strategies are considered, depending on the values of the optimal solution of the relaxed problem. In order to minimize the number of symetries generated by the branching tree, both branchings create partitions of the solutions of the Unsplittable-CMCF problem, as shown in Proposition \ref{prop:partition}.

\paragraph{Branching on the $y_k$ variables}
If there exists a single $p \in P^k$ such as the ${x_k^p}$ variable is in $]0,1[$, then two branches are created, one with $y_k=0$ and one with $y_k=1$.

\paragraph{Branching on the $x_k^p$ and $y_k$ variables}

If there are at least two paths $p$ and $p' \in P^k$ such as the $x_k^p$ and $x_k^{p'}$  variables are in $]0,1[$, then several branches are created, based on the paths associated with the non-null path variables. These paths are denoted $P_+^k = \{p | x_k^p \in ]0,1[ \}$. For each such commodity $k$, the divergence node $v_k$ (last node common to all paths in $P_+^k$ ) is defined similarly to \cite{barnhart2000}. The divergence node can be built by starting at the source node $s_k$ and checking if all paths of $P_+^k$ follow the same arc from $s_k$ to a node $u$. If they do, we iterate the same test from $u$ and the following nodes until we find a node from which at least two paths of $P_+^k$ follow different outgoing arcs. This vertex is the divergence node $v_k$.

The subpath between the source $s_k$ and the divergence node $v_k$ shared by all the paths of $P_+^k$ is called common path for the commodity $k$. It is denoted $p^{k+}$ below.

\remove{Several nodes are created}\add{From this relaxed solution, the branching rule creates the following nodes}:
\begin{enumerate}[label=(\Alph*),topsep=3pt]
    \item a new node of the branch-and-price tree is created by forcing commodity $k$ to follow the common path (i.e., $\forall a \in p^{k+}, a' \in \delta^+(tail(a)) \backslash p^{k+} : x_{a'}^k=0$), and forcing to use the arc \add{with origine $v_k$} $a \add{\in \delta^+(v_k)}$ with the highest usage ($ a =argmax_{a' \in \delta^+(v_k)}$ $ \sum_{p \in P^k | a' \in p} x_k^p $; if several arcs have the same load, one of them can be selected randomly) for commodity $k$ in the current RMP's optimal solution (i.e., $\forall a' \in \delta^+(v_k) \backslash \{a\}: x_{a'}^k=0$). The commodity $k$ is forced to be fully accepted by setting $y_k=0$. \label{branching:force_highest_arc}
    \item another new node of the branch-and-price tree is created by forcing commodity $k$ to follow the common path (this can be done as above) and preventing commodity $k$ from using the arc $a$ with origin $v_k$ and the highest usage (defined as above),  by setting $x_a^k=0$. The commodity is forced to be fully accepted by setting $y_k=0$. \label{branching:remove_highest_arc}
    \item another new node of the branch-and-price tree is created for every node $u$ in $p^{k+}$ ($u\in V | p^{k+} \cap \delta^+(u) \neq \emptyset$) by preventing commodity $k$ to follow the common path beyond $u$ ; this is achieved by setting $x_k^a=0$, with $a$ the arc in the common path with origin $u$. The commodity is forced to be fully accepted by setting $y_k=0$. \label{branching:cut_common_path}
    \item if the commodity $k$ is not forced to be fully accepted in the parent node, an additional node where the commodity is fully not accepted by setting $y_k=1$ is created.\label{branching:infeasible}
\end{enumerate}

\bigskip

These branching rules are illustrated in Figure \ref{fig:branchings}. For the sake of clarity, in the following, paths are described by the vertices it crosses separated by arrows. In Figure \ref{fig:branchings:paths}, 2 paths with non-null path variables for a given commodity, $s \shortrightarrow u \shortrightarrow v \shortrightarrow t$ and $s \shortrightarrow u \shortrightarrow v \shortrightarrow w \shortrightarrow t$, are shown in red ; path $s \shortrightarrow u \shortrightarrow v \shortrightarrow t$ carries more traffic for this commodity. The common path $s \shortrightarrow u \shortrightarrow v$ is shown with full lines. Figures \ref{fig:branchings:ruleA}, \ref{fig:branchings:ruleB} and \ref{fig:branchings:ruleC} illustrate the 3 branching rules \ref{branching:force_highest_arc}, \ref{branching:remove_highest_arc} and \ref{branching:cut_common_path} respectively: arcs removed from the graph for the considered commodity are shown in light dashed lines. Branching rule \ref{branching:infeasible} is not shown as the commodity would not enter this network in this branch.

Note that if the branching rule was only impacting arcs outgoing from the divergence node, as, for instance, in \cite{barnhart2000}, the path $s \shortrightarrow  w \shortrightarrow  t$ could appear in both branches.

\begin{figure}[ht]
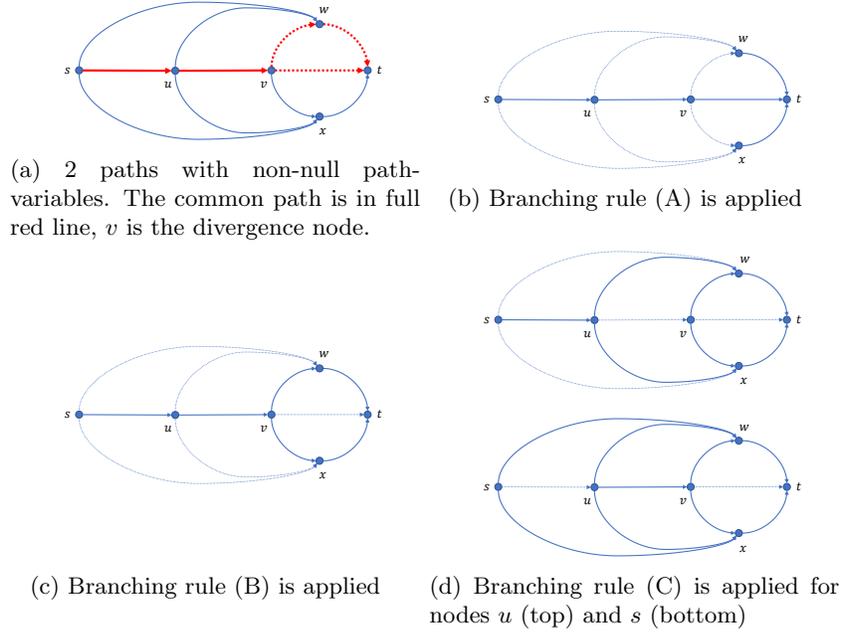

    \centering
    \begin{subfigure}{.45\textwidth}
   \includegraphics[page=11,trim={0 7cm 0 6cm},clip, width=1\linewidth]{images/tightenings/images_tightenings.pdf}
   \caption{2 paths with non-null path-variables. The common path is in full red line, $v$ is the divergence node.} \label{fig:branchings:paths}
    \end{subfigure}
    \begin{subfigure}{.45\textwidth}
   \includegraphics[page=12,trim={0 7cm 0 6cm},clip, width=1\linewidth]{images/tightenings/images_tightenings.pdf}
   \caption{Branching rule \ref{branching:force_highest_arc} is applied \newline} \label{fig:branchings:ruleA}
    \end{subfigure}
    \begin{subfigure}{.45\textwidth}
   \includegraphics[page=13,trim={0 3.4cm 0 1.5cm},clip, width=1\linewidth]{images/tightenings/images_tightenings.pdf}
   \caption{Branching rule \ref{branching:remove_highest_arc} is applied \newline} \label{fig:branchings:ruleB}
        \end{subfigure}
\begin{subfigure}{.45\textwidth}
   \includegraphics[page=14,trim={0 3.4cm 0 1.5cm},clip, width=1\linewidth]{images/tightenings/images_tightenings.pdf}
 \caption{Branching rule \ref{branching:cut_common_path} is applied for nodes $u$ (top) and $s$ (bottom)} \label{fig:branchings:ruleC}
 \end{subfigure}
    \caption{Illustration of the branching rules on a toy instances. Links removed for the considered commodity are shown with light dashed lines.} 
    \label{fig:branchings}
\end{figure}

\FloatBarrier

\begin{proposition}
 Both branching strategies provide a partition of feasible solutions of the unsplittable-CMCF problem. \label{prop:partition}
\end{proposition}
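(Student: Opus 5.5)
We work at a fixed node of the branch-and-price tree. Let $S$ denote the set of feasible unsplittable solutions compatible with the parent node's restrictions. By Remark \ref{rem:unique_var_at_one}, every solution in $S$ assigns commodity $k$ either to rejection ($y_k=1$) or to exactly one simple $s_k$--$t_k$ path $q$ ($y_k=0$). The branches only fix variables attached to commodity $k$. So it suffices to show two things about the possible ``fates'' of commodity $k$ in $S$, namely rejection or a single path $q$: every fate is allowed by at least one child, and no fate is allowed by two children. We treat the two strategies separately.

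\textbf{Branching on $y_k$.} The two children impose $y_k=0$ and $y_k=1$. In any solution of $S$, $y_k\in\{0,1\}$, so each solution falls in exactly one child. This case is immediate.

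\textbf{Branching on the divergence node.} Write the common path as $p^{k+}=(s_k=u_0\to u_1\to\dots\to u_m=v_k)$, with arcs $e_j=(u_{j},u_{j+1})$. Let $a$ be the selected arc in $\delta^+(v_k)$. If the parent already forces $y_k=0$, rule \ref{branching:infeasible} is absent and every solution of $S$ has $y_k=0$. Otherwise, the solutions with $y_k=1$ are exactly those of child \ref{branching:infeasible}, and every other child imposes $y_k=0$, so these children are disjoint from \ref{branching:infeasible}. It remains to classify a solution with $y_k=0$ and path $q$.

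Since $q$ starts at $s_k$ and is a simple path, let $j$ be the largest index such that $q$ starts with $e_0,\dots,e_{j-1}$. There are two cases.

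\emph{Case $j<m$.} Then $q$ leaves $u_j$ by an arc different from $e_j$, so $q$ does not use $e_j$. Hence $q$ is compatible with the child of rule \ref{branching:cut_common_path} for node $u_j$. It is not compatible with the children for $u_{j'}$, $j'<j$, because $q$ uses $e_{j'}$. It is also not compatible with the children for $u_{j'}$, $j'>j$, nor with \ref{branching:force_highest_arc} or \ref{branching:remove_highest_arc}: all of these force $k$ to follow the common path at least up to $u_{j'}$, which forbids the arc $q$ takes out of $u_j$.

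This point is the main obstacle. One must check that the encoding ``set $x_k^{a'}=0$ for all $a'\in\delta^+(tail(e))\setminus p^{k+}$'' really means following the prefix. It does, provided $q$ is a simple path starting at $s_k$ and the common path is simple. Then the only way to leave $u_{j}$ is via $e_{j}$, and induction on $j$ shows $q$ must contain the whole prefix. The rule \ref{branching:cut_common_path} child for $u_{j'}$ must be read as forcing the prefix up to $u_{j'}$ and forbidding $e_{j'}$. If it only forbade $e_{j'}$, overlaps between different cut nodes would arise; a path avoiding both $e_0$ and $e_1$ is an example. We would therefore make explicit that in rule \ref{branching:cut_common_path} the prefix up to $u$ is forced, as in Figure \ref{fig:branchings:ruleC}.

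\emph{Case $j=m$.} Then $q$ contains all of $p^{k+}$, is therefore excluded from every child of rule \ref{branching:cut_common_path}, and leaves $v_k$ by exactly one arc, because $q$ is simple and $v_k\neq t_k$. If that arc is $a$, then $q$ belongs to child \ref{branching:force_highest_arc} only. Otherwise $q$ belongs to child \ref{branching:remove_highest_arc} only.

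In every case each solution of $S$ lies in exactly one child, which establishes the partition. The remaining small check is that $v_k\neq t_k$, so that $q$ has an arc out of $v_k$. This holds because at least two distinct fractional paths diverge at $v_k$, so $v_k$ cannot be their common endpoint $t_k$.
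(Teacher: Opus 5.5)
Your proof is correct and follows the same case analysis as the paper's proof. Rejection goes to child (D), a path that first leaves $p^{k+}$ at $u$ goes to the rule-(C) child for $u$, and a path containing $p^{k+}$ goes to (A) or (B) according to its arc out of $v_k$; like the paper, you tacitly restrict to solutions in which exactly one of $y_k, \{x_k^p\}_{p}$ equals one, which Remark~\ref{rem:unique_var_at_one} guarantees only for optimal solutions, so strictly both arguments partition the potentially optimal solutions. Your explicit requirement that the (C)-child for $u$ also forces the prefix of $p^{k+}$ up to $u$ is a real improvement: the paper uses it silently when it says the solution is feasible ``only in branch (C)'', and without it distinct (C)-children could overlap.
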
 

\begin{proof}
Consider a node of the tree branching on commodity $k \in K$, and an optimal solution of the extended formulation of the Unsplittable-CMCF problem for variables related to commodity $k$: $\{{y_k}^*\} \cup \{{x_k^p}^*\}_{p \in P^k}  \in \{0,1\}^{|P^k|+1}$.
Suppose that there exists a single path $p \in P^k$ such that ${x_k^p}^*$ is fractional. A branching on $y_k$ is applied, and two branches are created, one with $y_k=0$ and one with $y_k=1$. The two branches generated by a branching on a $y_k$ variable provide a partition of feasible solutions as $y_k \in \{0,1\}$ by the definition of problem $\mathcal{U-CONVEX}$.
If several $x_k^p$ variables are in $]0,1[$, a branching on $x_k^p$ and $y_k$ is applied. From Remark of Section \ref{rem:unique_var_at_one}, in all the optimal solutions of the Unsplittable-CMCF problem, one and only one of the variables in $\{{y_k}^*\} \cup \{{x_k^p}^*\}_{p \in P^k}$ equals one, all others equal zero. Lets show that each such solution is feasible in one and only one of the branches generated by a branching on the $x_k^p$ and $y_k$ variables.

Several cases can be considered: 
\begin{itemize}[topsep=3pt]
\item Optimal solutions with $y_k=1$ and $x_k^p = 0 \quad  \forall p \in P^k$ are included in the branch \ref{branching:infeasible} - this branch is either created at the considered node or higher in the branch-and-price tree. These solutions are not feasible in other branches.
\end{itemize}
In all other potentially optimal solutions, $y_k=0$ and, for one path denoted $\tilde{p} \in P^k$, variable $x_k^{\tilde{p}}=1$; for all other paths $p \in P^k \backslash \{\tilde{p}\}$, we have $x_k^p=0$.
The relaxed optimal solution for the considered node has a common path described above and written $p^{k+}$, between the commodity source $s_k$ and a divergence node $v_k$.
\begin{itemize}[resume*,topsep=3pt]
\item If the common path $p^{k+}$ is not a sub-path of $\tilde{p}$, there exists a node $u$ such that $\delta^+(u) \cap P^{k+} = \{a\}$ and $\delta^+(u) \cap P^{k+} = \{a'\}$ with $a \neq a'$. This solution is feasible only in branch \ref{branching:cut_common_path}.
\item If the common path $p^{k+}$ is a sub-path of $\tilde{p}$ and the arc in $\delta^+(v_k)$ with the highest usage is in $p^{k+}$, then the solution is feasible only in branch \ref{branching:force_highest_arc},
\item If the common path $p^{k+}$ is a sub-path of $\tilde{p}$ and the arc in $\delta^+(v_k)$ with the highest usage is not in $p^{k+}$, then the solution is feasible only in branch \ref{branching:remove_highest_arc}.
\end{itemize}

So, both proposed branchings offer partitions of potentially optimal solutions of the problem $\mathcal{U-CONVEX}$.
\unskip\nobreak\hfill $\square$
\end{proof}

\begin{corollary}
The branching described above is complete.
\end{corollary}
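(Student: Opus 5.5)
The plan is to combine Proposition \ref{prop:partition} with a finiteness argument. Completeness of a branching scheme has two parts. First, no feasible (in particular no optimal) integer solution of $\mathcal{U-CONVEX}$ is lost when a node is branched on. Second, the branch-and-price tree is finite, and every leaf either is pruned or has a relaxed solution that is integral, so that it is feasible for the Unsplittable-CMCF problem. Once both parts hold, the best integral leaf is an optimal solution of the Unsplittable-CMCF problem.

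For the first part I invoke Proposition \ref{prop:partition} directly. At any node, the children partition the feasible unsplittable solutions of the parent. By induction on depth, every feasible unsplittable solution lies in exactly one leaf of the tree. So no solution is discarded by the branching itself, and any pruning is done only by bound or by infeasibility.

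For the second part I argue as follows.

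\textbf{Every non-leaf node can be branched.} If the relaxed solution at a node is not integral, then some commodity $k$ has a variable in $\{y_k\}\cup\{x_k^p\}_{p\in P^k}$ lying in $]0,1[$. Then one of two rules applies. If at least two path variables are fractional, the branching on $x_k^p$ and $y_k$ applies. If at most one path variable is fractional, then $y_k$ is fractional or the rule on $y_k$ applies. So a non-integral node always has an applicable rule.

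\textbf{Each branching strictly tightens a finite set of restrictions.} A branching either fixes $y_k$, which was previously free, or sets some $x_k^a=0$ for an arc that was not yet forbidden for $k$. The finite set of fixings is the pairs $(k,a)\in K\times A$ together with the $y_k$. I expect this to be the main obstacle, because rule \ref{branching:force_highest_arc} and rule \ref{branching:cut_common_path} must be shown to forbid at least one new arc for $k$.
\begin{itemize}
\item For rule \ref{branching:force_highest_arc}, at least two fractional paths diverge at $v_k$. Hence an arc of $\delta^+(v_k)$ other than the highest-usage one carries positive flow, so that arc was not forbidden before and is forbidden now.
\item For rule \ref{branching:cut_common_path}, the forbidden arc lies on the common path $p^{k+}$. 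It therefore carries flow equal to $\sum_{p\in P_+^k}x_k^p>0$ in the current relaxation, so it was allowed before.
\item For rule \ref{branching:remove_highest_arc}, the removed arc has positive usage, so it too was allowed before.
\end{itemize}
Because each branching forbids at least one previously allowed fixing, the depth of any branch is bounded by $|K|(|A|+1)$.

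\textbf{Finite branching factor.} The number of children per node is at most $|V|+3$. The tree is therefore finite, and each leaf is either infeasible, pruned by bound, or integral.

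Finally, the column generation at every node is assumed to converge to the optimal relaxation restricted to that node. The optimal unsplittable solution lies in some leaf, and its value is at least that leaf's relaxation bound. This is why bound-based pruning cannot discard it, which completes the argument.
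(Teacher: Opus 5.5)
Your overall argument is sound, and it does more than the paper's. The paper gives no separate argument for the corollary. It is read off from Proposition~\ref{prop:partition}: each branching distributes the potentially optimal unsplittable solutions of a node among its children, so none is ever lost. Termination of the tree is left implicit.

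You use the proposition in the same way for that half. Strictly, only its covering part is needed, and only for potentially optimal solutions, which is all its proof establishes. You then add what the paper omits: every child adds a new fixing from the finite set $\{y_k\}_{k\in K}\cup(K\times A)$. So the depth is bounded, the branching factor is finite, and every leaf is infeasible, pruned, or integral. Your checks for rules \ref{branching:force_highest_arc}, \ref{branching:remove_highest_arc} and \ref{branching:cut_common_path} are correct. This relies on the usual convention that columns violating a node's fixings are removed, so positive flow of $k$ on $a$ means $(k,a)$ is not yet forbidden.

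The step that needs repair is the $y_k$-branching. You simply assert that $y_k$ was previously free, and your case analysis leaves the case ``no fractional path variable, $y_k$ fractional'' without an assigned rule. Nothing in the relaxations forces $\sum_{p\in P^k}x_k^p+y_k=1$. Path variables are only bounded below, and where \eqref{eq:inner_usage_arc} is slack an optimal RMP solution can carry surplus flow at no cost. For example, $y_k$ may already be fixed to $0$ by an ancestor, with one path variable equal to $1$ and another in $]0,1[$. The rule on $y_k$ then creates a child identical to its parent, and your depth bound $|K|(|A|+1)$ fails.

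The fix is short. At an optimum $y_k\leq 1$, since $Mb_k>0$. Lowering path variables keeps every constraint of the node satisfied: the usage constraints are $\leq$ in $x$, and branchings only force variables to $0$. It also does not increase the objective. Hence you may take, at every node, an optimal relaxed solution with $\sum_{p\in P^k}x_k^p+y_k=1$ for all $k$. With this equality:
\begin{itemize}
\item A single fractional $x_k^{p}$ forces the other path variables of $k$ to be $0$ and $y_k=1-x_k^{p}\in\,]0,1[$, so $y_k$ is indeed unfixed.
\item A fractional $y_k$ forces some fractional path variable.
\item $y_k$ fixed to $1$ forces all $x_k^p=0$, so rule \ref{branching:infeasible} never duplicates its parent either.
\end{itemize}
Once you add this normalization, your proof is valid and establishes a stronger statement (partition plus finite termination) than the paper's one-line derivation.
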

Although the second branching creates a relatively large number of nodes, it has the advantage over simpler branching strategies, such as the one described in \cite{barnhart2000}, of offering a partition of the potentially optimal solutions while cutting the fractional solution of the node considered. This avoids creating similar paths at different levels of the branching tree.

\subsection{Valid Inequalities for $\mathcal{U-INNER}$} \label{sect:tight_inner_apprx}

The $\mathcal{INNER}$ relaxation of the Unsplittable-CMCF problem can be tightened by leveraging the unsplittability hypothesis.
First, the paths pricing problem can be improved by requiring all paths generated to be unsplittable: all arcs on newly generated paths should be able to accept the demand's bandwidth.
If commodities are already forced to go through an arc $a \in A$ (i.e. $\forall p \in P^k: a \in p$), new paths for other demands should have sufficient residual capacity to cross $a$.

It is also possible to leverage the unsplittability constraints to further tighten the $\mathcal{INNER}$ relaxation of the Unsplittable-CMCF problem.  
As discussed in Section \ref{sect:inner_approx}, for every arc $a \in A$ at the optimum, a single vertex $i \in N$ per arc has a non-null coefficient $z_a^i$. This vertex is necessarily larger than the size $b_k$ of any \add{commodity \remove{demands}} $k \in K$ crossing the arc (i.e. $\forall a \in A, i \in N, b_k < c_a^i \implies z_a^i = 0$). If there is a demand $k \in K$ with $b_k > c_a^i$ and $z_a^i > 0$, then this demand would only be partially accepted.
This consideration allows us to replace the constraint

\begin{equation*}
\sum_{k \in K} \sum_{ p \in P^k | a \in p} b_k x_k^p - \sum_{i \in N} z_a^i c_a^i \leq 0 \qquad \forall a \in A
\end{equation*}

 by a set of constraints:

\begin{equation*}
\sum_{k \in K | b_k \geq f_j} \sum_{ p \in P^k | a \in p} b_k x_k^p  - \sum_{i \in N | c_a^i \geq f_j} z_a^i c_a^i \leq 0 \qquad \forall a \in A, j \in F
\end{equation*}

for any set of threshold values $\{{f_j}\}_{j \in F}$ with $F \subset \mathbb{N}^+$. 

In the tests presented in Section \ref{sect:perf_unsplit}, for example, a list of unique demand sizes $b_k$ is used for $F$ (as including other values in $F$ would only create redundant constraints). 
This problem, named $\mathcal{TIGHT-INNER}$, is by definition tighter than the formulation $\mathcal{INNER}$ presented above, while having the same unsplittable solutions. It does not change the polynomial nature of the formulation.

Including these constraints, the Restricted Master Problem becomes : 
\begin{align}
\mathcal{TIGHT-INNER}:  &    \min \sum_{a \in A} \sum_{i \in N} z_a^i r_a(c_a^i) + \sum_{k \in K} M b_k y_k \label{eq:tight_inner_obj}\\
& \eqref{eq:inner_demand_convexity}, \quad \eqref{eq:inner_polytop_convexity}, \quad \eqref{eq:inner_x_y_def}, \quad \eqref{eq:inner_z_def}, \nonumber \\
\beta_a^j : \quad &  \sum_{k \in K | b_k \geq f_j} \sum_{ p \in P^k | a \in p} b_k x_k^p - \sum_{i \in N | c_a^i \geq f_j} z_a^i c_a^i \leq 0 \qquad &\forall a \in A,j \in F,
\label{eq:tight_inner_usage_arc} 
\end{align}
where the Objective and all constraints except \eqref{eq:tight_inner_usage_arc} are similar to the $\mathcal{INNER}$ problem, and Constraints \eqref{eq:inner_usage_arc} are replaced by stronger Constraints \eqref{eq:tight_inner_usage_arc}, taking advantage of the unsplitability of flows. Note that Constraints \eqref{eq:inner_usage_arc} are included in Constraints \eqref{eq:tight_inner_usage_arc} for all $j \in F$ with $f_j=0$.
The dual variables $\beta$ are attached to these new constraints (with $\beta_a^j$ being the dual variable of the constraint for the arc $a$ and the threshold value $f_j$).
This change significantly increases the number of constraints in the Restricted Master Problem, but it  tightens the relaxation Unsplittable-CMCF problem. The balance between the weight of the problem and the tightening can be managed by the number of $f_j$ terms considered.

The pricing problems used to generate new paths and vertices can be derived from the above problem - and are very similar to the ones presented in \ref{sect:inner_approx}, replacing, for all arcs $a$, the $\beta_a$ terms by a sum of $\beta_a^j$. Using the same notation as above, where optimal dual values are marked with a $^*$, the pricing problems solved to generate new variables associated with paths or vertices are:
\begin{align*}
    \text{Find new path $p \in P^k$ such as: } \quad & \sum_{a \in p} \sum_{j \in F | b_k \geq f_j } {\beta_a^j}^* < \sfrac{\alpha_k^*}{b_k} \\
    \text{Find $c_a^i \in [0,c_a]$ such as: } \quad&   \pmb{r}_a(c_a^i) < \sum_{j \in F | c_a^i \geq f_j}{\beta_a^j}^* c_a^i - \gamma_a^*
\end{align*}
These two pricing problems are very similar to the ones described in the $\mathcal{INNER}$ column generation approach: the first one is a shortest-path problem, and the second one can be solved by a series of one-dimensional optimization problems similar to the ones described in Section \ref{sect:inner_approx}.

\subsection{Further tightening of the problem with demand patterns} \label{sect:patterns_problem}

As shown in Table \ref{tab:gap_and_perf}, the approaches described above lead to relaxations of the problem that are still too weak and prove inefficient when running a branch-and-price tree to solve the Unsplittable-CMCF problem.
A much tighter problem can be built by taking into account the total bandwidths of all commodities crossing an arc, and not only (as in the formulation $\mathcal{TIGHT-INNER}$) the bandwidth of each commodity taken independently. This formulation was first applied to Unsplittable linear Multi-Commodity Flow problems, for instance in \cite{park2003} or \cite{fan2024} but can be applied with even greater efficiency to the Unsplittable-CMCF problem.

To achieve this tighter relaxation, consider patterns of commodities $s \in S$ where $S$ is the set of all \add{subsets of $K$ - so each pattern $s \in S$ is a set of commodities \remove{sets of commodities}. The $\mathcal{U-PATTERN}$ formulation below will aim at finding the pattern, or set of commodities, associated to all arcs of the graph $G$}. For every arc $a \in A$ and pattern $s \in S$, we introduce a pattern variable $u_a^s \in \mathbb R^+$ representing the usage of pattern $s$ \remove{of commodities} over arc $a$. The problem optimizes the usage of different patterns, and the objective function is the sum over all arcs of the cost of all commodities present in the patterns (weighted by the pattern usage variable $u_a^s$). 
The Unsplittable-CMCF problem can be formulated as\remove{the following problem} : 
\begin{align}
\mathcal{U-PATTERN}:\quad  &    \min \sum_{a \in A} \sum_{s \in S}  \pmb{r}_a(\sum_{k \in s} b_k) u_a^s + \sum_{k \in K}  M b_k y_k, \label{eq:pattern_objective}\\
  & -\sum_{p  \in P^k} x_k^p - y_k \leq -1 \qquad & \forall k \in K, \label{eq:pattern_commodities_convexity}\\
      &  \sum_{p \in {P^k} | a \in p} x_k^p - \sum_{s \in S | k \in s} u_a^s \leq 0 \qquad & \forall a \in A, k \in K, \label{eq:pattern_commo_pattern_arc}\\
  & \sum_{s \in S} u_a^s = 1 \qquad &\forall a \in A, \label{eq:pattern_pattern_convexity}\\
    & x_k^p \in \{0,1\} \qquad & \forall k \in K,p \in P^k,\\
    & y_k \in \{0,1\} \qquad & \forall k \in K,\\
    & u_a^s \geq 0 \qquad & \forall a \in A, s \in S. \label{eq:pattern_u_def}
\end{align}
The Objective \eqref{eq:pattern_objective} sums, for all arcs, the cost of using patterns $\pmb{r}_a(\sum_{k \in s} b_k)$ weighted by the patterns usage variables $u_a^s$. As in the previous formulation, Constraints \eqref{eq:pattern_commodities_convexity} ensure that each commodity is accepted (or that the corresponding variables $y_k$ are non-null). Constraints \eqref{eq:pattern_commo_pattern_arc} limit the use of a path $p \in P^k$ to the level of use of the patterns containing commodity $k$ for all arcs of the path, while Constraints \eqref{eq:pattern_pattern_convexity} ensure that, for each arc, the sum of the patterns usages is one.

As shown in Proposition \ref{prop:pattern_01} below, the patterns variables can be defined in $\mathbb{R}^+$, and a branch-and-price tree can be built only on the $x_k^p$ and $y_k$ variables for $k \in K$ and $p \in P^k$ to solve the Unsplittable-CMCF problem.
\begin{proposition}
    If, for an optimal solution of $\mathcal{PATTERN}$, for all commodities $k \in K$ and paths $p \in P^k$, $x_k^p \in \{0,1\}$ and $y_k \in \{0,1\}$, then for all arcs $a \in A$ and patterns $s \in S$,  $u_a^s \in \{0,1\}$. \label{prop:pattern_01}
\end{proposition}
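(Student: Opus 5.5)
Fix an optimal solution of $\mathcal{PATTERN}$ in which all $x_k^p$ and $y_k$ are integral, and fix an arc $a \in A$. Let $s_a^* = \{k \in K \mid \exists p \in P^k,\ a \in p,\ x_k^p = 1\}$ be the set of commodities whose selected path crosses $a$. By Remark \ref{rem:unique_var_at_one}, each commodity uses at most one path. Hence, for every $k$, the left-hand term $\sum_{p \in P^k \mid a \in p} x_k^p$ of Constraint \eqref{eq:pattern_commo_pattern_arc} equals $1$ if $k \in s_a^*$ and $0$ otherwise. The plan is to show that optimality forces $u_a^{s_a^*} = 1$ and $u_a^s = 0$ for all $s \neq s_a^*$.

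\textbf{Step 1: every pattern used must contain $s_a^*$.} For $k \in s_a^*$, Constraint \eqref{eq:pattern_commo_pattern_arc} gives $\sum_{s \ni k} u_a^s \geq 1$. Combined with \eqref{eq:pattern_pattern_convexity} and $u_a^s \geq 0$, this gives $\sum_{s \not\ni k} u_a^s = 0$. So every pattern $s$ with $u_a^s > 0$ contains $k$. Taking all $k \in s_a^*$, every pattern in the support satisfies $s \supseteq s_a^*$.

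\textbf{Step 2: optimality forces the support to be exactly $s_a^*$.} The arc $a$ contributes $\sum_s \pmb{r}_a(\sum_{k \in s} b_k)\, u_a^s$ to the objective. For $s \supseteq s_a^*$, monotonicity of $\pmb{r}_a$ gives $\pmb{r}_a(\sum_{k\in s} b_k) \geq \pmb{r}_a(\sum_{k \in s_a^*} b_k)$. Now move all weight onto $s_a^*$, i.e. set $u_a^{s_a^*} = 1$ and every other $u_a^s = 0$. This keeps \eqref{eq:pattern_commo_pattern_arc} and \eqref{eq:pattern_pattern_convexity} feasible, and it touches only the variables of arc $a$. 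It therefore does not increase the objective, and it strictly decreases it if some superset $s \supsetneq s_a^*$ with strictly larger cost had positive weight.

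\textbf{Main obstacle: strictness.} If $\pmb{r}_a$ is only nondecreasing, or if some $b_k$ are zero, an optimal solution may split weight among several supersets with equal cost. Then the statement holds only in the form ``there exists an optimal solution with $u$ integral,'' or under strict monotonicity of $\pmb{r}_a$ with all $b_k > 0$. The paper assumes $\pmb{r}_a$ increasing, which I read as strictly increasing. Under that reading, any $s \supsetneq s_a^*$ has $\sum_{k \in s} b_k > \sum_{k \in s_a^*} b_k$ (assuming $b_k > 0$), so its cost is strictly larger. Positive weight on such an $s$ would then contradict optimality, which forces $u_a^{s_a^*} = 1$. I would state this assumption explicitly and add the remark that, without strictness, one applies the exchange argument to obtain an optimal solution of the required form.
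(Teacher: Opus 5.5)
Your argument is correct and is essentially the paper's proof: your $s_a^*$ is the paper's $\hat{s}$, you show that every pattern with $u_a^s>0$ must contain it, and you then move all of arc $a$'s weight onto it using monotonicity of $\pmb{r}_a$ (the paper phrases this as a contradiction involving two fractional patterns $s,s'$, and silently relies on the same strict monotonicity and $b_k>0$ that you make explicit). One small fix: you do not need Remark~\ref{rem:unique_var_at_one}, which concerns $\mathcal{U-CONVEX}$, because feasibility of \eqref{eq:pattern_commo_pattern_arc} together with \eqref{eq:pattern_pattern_convexity} already forces $\sum_{p \in P^k | a \in p} x_k^p \le 1$ in $\mathcal{PATTERN}$.
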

\begin{proof}
Let us consider the contrary. There exists an optimal solution of the problem $\mathcal{PATTERN}$ with  $y_k \in \{0,1\}$ and $x_k^p \in \{0,1\}$ for all $k \in K$ and $p \in p^k$, and a variable $u_a^s \in ]0,1[$ for some arc $a$ and the pattern of commodities $s$.

The Constraints \eqref{eq:pattern_pattern_convexity} and \eqref{eq:pattern_u_def} ensure that at least another variable $u_a^{s'}$, with pattern $s' \neq s$ is in $]0,1[$.

The Constraints \eqref{eq:pattern_commo_pattern_arc} ensure that all commodities $k \in K$ with $\sum_{p \in {P^k} | a \in p} x_k^p = 1$ are in both $s$ and $s'$.

Let us call $\hat{s}$ the pattern that contains all commodities $k$ with $\sum_{p \in {P^k} | a \in p} x_k^p = 1$ and no other.
We know that $\hat{s} \subset{s}$ and $\hat{s} \subset{s'}$, so $s$ or $s'$ contains a commodity not in $\hat{s}$. As the cost function $\pmb{r}_a$ increases and $\sum_{k \in s} b_k > \sum_{k \in \hat{s}} b_k$ or $\sum_{k \in s'} b_k > \sum_{k \in \hat{s}} b_k$, setting $u_a^{\hat{s}}=1$ and both $u_a^s=0$ and $u_a^{s'}=0$ (if $s$ and $s' \neq \hat{s}$), we build a solution that is feasible and dominates the one previously considered.
So all optimal solutions of $\mathcal{PATTERN}$ with, for all commodities $k \in K$ and paths $p \in P^k$, $x_k^p \in \{0\add{,\remove{.}}1\}$ and $y_k \in \{0,1\}$, should have $u_a^s \in \{0,1\}$ for all arcs $a$ and pattern $s$.
\unskip\nobreak\hfill $\square$
\end{proof}

The formulation $\mathcal{U-PATTERN}$ can be relaxed as $\mathcal{PATTERN}$ by relaxing the domain of the $x_k^p$ and $y_k$ variables from $\{0,1\}$ to $[0,1]$ for all commodities $k \in K$ and paths $p \in P$:
\begin{align*}
 \mathcal{PATTERN}: \quad  &    \min \sum_{a \in A} \sum_{s \in S}  \pmb{r}_a(\sum_{k \in s} b_k) u_a^s + \sum_{k \in K}  M b_k y_k, \\
&\eqref{eq:pattern_commodities_convexity}, \quad  \eqref{eq:pattern_commo_pattern_arc}, \quad  \eqref{eq:pattern_pattern_convexity} \\
& x_k^a \geq 0  \quad \forall a \in A, k \in K, \qquad y_k \geq 0 \quad   \forall  k \in K,  \qquad u_a^s \geq 0 \quad  \forall a \in A, s \in S. 
\end{align*}
This representation requires identifying the patterns used for all arcs ; it also requires $|K| \times |A|$ constraints to manage the relations between the paths and pattern usage variables.
As both pattern and path variables grow exponentially with the size of the problem, they need to be generated via a column-generation scheme.

The pattern formulation presented above relies on solving two pricing problems, a shortest path problem to generate new path variables, and a convex knapsack problem to generate new pattern variables.
Denoting $\alpha_k$, $\beta_a^k$ and $\gamma_a$ the dual variables associated with Constraints \eqref{eq:pattern_commodities_convexity}, \eqref{eq:pattern_commo_pattern_arc} and \eqref{eq:pattern_pattern_convexity}, the pricing problems can be written as:
\begin{align*}
    \text{Find a path $p \in P^k$ such as:}   \qquad& \sum_{a \in p} b_k {\beta_a^k}^* < \alpha_k^* \\
    \text{Find a pattern $s \in S$ such as:} \qquad& \sum_{k \in s} {\beta_a^k}^* - \pmb{r}_a(\sum_{k \in s} b_k) > \gamma_a^*
\end{align*}
While the first problem is a shortest path problem that can be solved in polynomial time, the second pricing problem is a the lagrangian of a non-linear knapsack problem, which is known, for example in \cite{hochbaum1995}, to be pseudopolynomial. This makes the pricing problems more computationally expensive than the $\mathcal{INNER}$ formulation, but still tractable.
However, the number of generated sets may increase exponentially due to symmetries between \add{commodities \remove{demands}}.

\FloatBarrier

\begin{figure}[ht]
\centering
\includegraphics[page=  15,trim={3cm 6.8cm 3cm 7.5cm},clip, width=.6\textwidth]{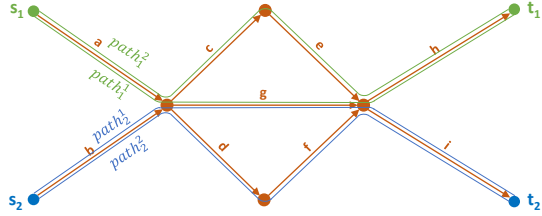}
    \caption{\add{Example of instance with two commodities of size $1$, with two paths each. All edges have a capacity greater than $2$.}} 
    \label{fig:two_paths}
\end{figure}

\add{Figure \ref{fig:two_paths} illustrates the different relaxations on a toy instance, with two commodities; commodity $1$ has a size $b_1=1$ and commodity $2$ has a size $b_2=2$. For each commodity, two paths are considered, one of which crosses the arc $g$. For the sake of simplicity, we will suppose that the optimal solutions of this example splits each commodity equally between its two paths (so at the optimum, the values of the path usage variables $x_k^p$ used in the extended formulations equal $x_1^1=x_1^2=0.5$ for commodity $1$ and $x_2^1=x_2^2=0.5$ for commodity $2$). We can compare the cost induced by arc $g$ through the associated cost function $\pmb{r}_g$ in the relaxations $\mathcal{CONVEX}$, $\mathcal{PATTERN}$, $\mathcal{TIGHT-INNER}$ and $\mathcal{INNER}$. \\
In $\mathcal{CONVEX}$, the cost associated with arc $g$ is given by $\pmb{r}_g(x_1^1 b_1 + x_2^1 b_2)=\pmb{r}_g(1.5)$.\\
In $\mathcal{INNER}$, the load of arc $g$ is represented as a convex representation of load values $c_g^i$ weighted by variables $z_g^i$. The cost associated with arc $g$ is formulated as $z_g^1 \pmb{r}_g(c_g^1)= 1 \times \pmb{r}_g(1.5)$  $\qquad$  (taking $c_g^1=1.5$, and $z_g^1=1$), so Constraint \eqref{eq:inner_usage_arc}, stating that $z_g^1 c_g^1 = 1.5 \geq x_1^1 b_1 + x_2^1 b_2 = 1.5$ holds true. Note that the optimal values of $\mathcal{CONVEX}$ and $\mathcal{INNER}$ are equal, as seen in Section \ref{sect:inner_approx}.\\
In $\mathcal{TIGHT-INNER}$, two arc load values $c_g^1=1$ and $c_g^2=2$ are considered, with their associated weighting variables $z_g^1$ and $z_g^2$. The cost of arc $g$ becomes $z_g^1 + \pmb{r}_g(c_g^1) + z_g^2 \pmb{r}_g(c_g^2)= 0.5 \pmb{r}(1) + 0.5 \pmb{r}(2)$ $\qquad$ ($z_g^1=0.5$ and $z_g^2=0.5$), so Constraints \eqref{eq:tight_inner_usage_arc} hold true: $z_g^1 c_g^1+z_g^1 c_g^1=0.5 \times 1+0.5 \times 2 = 1.5 \geq x_1^1 b_1 + x_1^2 b_2 = 1.5$, and $z_g^1 c_g^1=0.5 \times 2 \geq x_1^2 b_2 = 1$. Because the cost function $\pmb{r}_g$ is convex, this relaxation gives an higher cost than the previous ones.\\
Finally, in $\mathcal{PATTERN}$, the pattern containing both commodities, $s = \{1, 2\}$ is considered, and its associated coefficient $u_g^s$ equals 0.5 in the optimal solution defined above. So the cost associated with arc $g$ becomes: $u_g^s \pmb{r}_g(b_1+b_2)=0.5 \pmb{r}_g(3)$, and constraint \eqref{eq:pattern_commo_pattern_arc} holds true: $u_g^s = 0.5 \geq x_1^1 = 0.5$, and $u_g^s = 0.5 \geq x_2^1 = 0.5$. Once again, because of the convexity of the cost function $\pmb{r}_g$, this formulation leads to an higher cost than the previous ones.\\
Figure \ref{fig:unsplit_comparison} gives visual representations of the computation of the cost associated to arc $g$ with the different relaxations.}

\begin{figure}[ht]
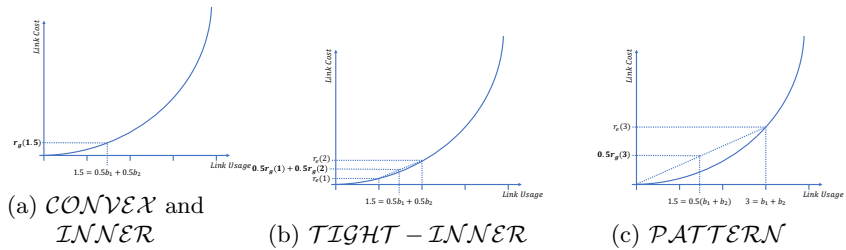

\centering
\begin{subfigure}{.32\textwidth}
  \centering
  \captionsetup{justification=centering}

  \includegraphics[page=16,trim={0cm 4cm 0cm 4.5cm},clip, width=1\linewidth]{images/tightenings/images_tightenings.pdf}
  \caption{$\mathcal{CONVEX}$ and $\mathcal{INNER}$}
\end{subfigure}%
\begin{subfigure}{.32\textwidth}
  \centering
  \captionsetup{justification=centering}

  \includegraphics[page=17,trim={0.cm 4cm 0cm 4.5cm},clip, width=1\linewidth]{images/tightenings/images_tightenings.pdf}
  \caption{$\mathcal{TIGHT-INNER}$}
\end{subfigure}
\begin{subfigure}{.32\textwidth}
  \centering
  \captionsetup{justification=centering}

  \includegraphics[page=18,trim={0.cm 4cm 0cm 4.5cm},clip, width=1\linewidth]{images/tightenings/images_tightenings.pdf}
  \caption{$\mathcal{PATTERN}$}
\end{subfigure}
\caption{\add{Illustration of the cost of arc $g$ in Figure \ref{fig:two_paths} for relaxations of the Unsplittable-CMCF problem.}}
\label{fig:unsplit_comparison}
\end{figure}

\FloatBarrier

\subsection{Heuristic upper bound} \label{sect:greedy_heuristic}

A heuristic solution can be computed with a greedy algorithm, where commodities are accepted sequentially in the network. 
When considering a given commodity $k \in K$, we note $x_a$ the total flow of all previously accepted commodities over an arc $a\in A$, and we define a cost $r_a$ as the marginal cost of commodity $k$ if it can cross arc $a$: $r_a = \pmb{r}_a(x_a+b_k) - \pmb{r}_a(x_a) \text{ if } x_a+b_k \leq c_a \text{ and } +\infty$ if $x_a+b_k > c_a $.
A shortest path $p$ from $s_k$ to $t_k$, weighted by $r_a$ for all arcs $a \in A$ is computed. If its total cost is finite, the commodity is accepted in the network, and its bandwidth $b_k$ is added to the $x_a$ for $a \in p$, and the next commodity is considered.

The algorithm \ref{algo:greedy} describes a run of the greedy algorithm, where the function \textsc{ShortestPath}$(s_k,t_k,V,A,r)$ returns the shortest paths between nodes $s_k$ and $t_k$ in graph $(V,A)$ weighted by the coefficients $\{r_a\}_{a \in A}$ (the cost of this path may be infinite). Function \textsc{Shuffle}$(K)$ return a list of randomly ordered demand in the set $K$. Random reordering ensures that different calls to the \textsc{GreedyHeuristic} function explore different initial conditions.

\begin{algorithm}
\small{

\begin{algorithmic}
  \State $K' \gets \texttt{Shuffle}(K)$
  \State $\forall k \in K': \quad y_k \gets 0, \qquad \forall a \in A: \quad x_a \gets 0$
      \For{$k \in K'$}
          \State $\forall a \in A: \quad   \algorithmicif \quad  x_a + b_k \leq c_a \quad  \algorithmicthen \quad r_a \gets \pmb{r}_a(x_a+b_a) - \pmb{r}_a(x_a) \quad \algorithmicelse \quad r_a \gets+ \infty
          $
         
        \State $p_k \gets \texttt{ShortestPath}(s_k, t_k, V, A, \{r_a\}_{a \in A})$
        \If {$\sum_{a \in p_k} r_a < + \infty$}
          \State $\forall a \in p_k: \quad x_a \gets x_a + b_k$
        \Else
          \State $y_k \gets 1$
        \EndIf
      \EndFor
    \State \Return $\sum_a \pmb{r}_a(x_a) + \sum_{k \in K}  M b_k y_k$

\end{algorithmic}    
    } 
  \caption{Greedy Algorithm for the Unsplittable-CMCF Problem. }
  \label{algo:greedy}
\end{algorithm}

This greedy heuristic result is highly dependent on the order of the commodities $K$, but there may not be any order of commodities that provides the optimal solution.
For example, in the instance described in Figure \ref{fig:instance_greedy}, all orders of the commodities lead to a suboptimal solution.
3 commodities with source $s$ and destination $t$ are considered, with bandwidths $b_1=3$, $b_2=2$ and $b_3=2$. Arc costs functions $\pmb{r}_a$ are described in the figure.
Table \ref{tab:greedy_heuristic} describes the results of Algorithm \ref{algo:greedy}: column "Commodity Order" describes the order of commodities in the set $K$ provided to the algorithm~; columns "Paths" describe the paths created for each commodity, while columns "Arcs Final Usage" and "Arcs Final Cost" describe the usages and costs of each arc in the solution generated. The last column gives the objective value returned by the objective.
As commodities 1 and 2 have the same size, the table describes all possible solutions generated by algorithm \ref{algo:greedy}.
A feasible solution with a lower objective value is also provided (line "Optimal solution").

\begin{figure}[ht]
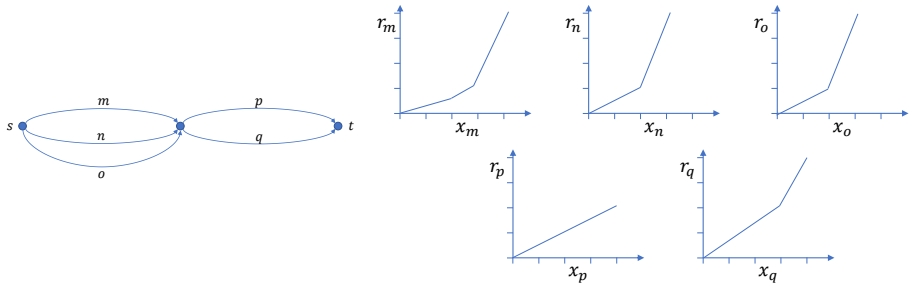

    \begin{subfigure}{.4\textwidth}
        \includegraphics[page=5,trim={2.8cm 2.5cm 1cm 1.5cm},clip, width=1\linewidth]{images/tightenings/images_tightenings.pdf}
        \caption{Graph of the example instance.}
        \label{fig:ex_inst_alone_L}
    \end{subfigure}
    \begin{subfigure}{.6\textwidth}
        \includegraphics[page=6,trim={1.5cm 5.cm 1cm 1.5cm},clip, width=1\linewidth]{images/tightenings/images_tightenings.pdf}
        \caption{Arc costs for the example instance.}
        \label{fig:ex_inst_alone_R}
    \end{subfigure}
    \caption{Example of instance with 3 commodities of size 2, 2 and 3 from $s$ to $t$, where the heuristic defined by Algorithm \ref{algo:greedy} does not provide any optimal solution.}
    \label{fig:instance_greedy}
\end{figure}

\FloatBarrier

\begin{table}[H]
\small{
\begin{tabular}{c|ccc|ccccc|ccccc|c}
\multirow{2}{*}{\makecell{Commodities\\Order}}	&	\multicolumn{3}{c|}{Paths} & \multicolumn{5}{c|}{Arcs Final Usage} & \multicolumn{5}{c|}{Arcs Final Cost} & \multirow{2}{*}{\makecell{Objective \\ Value}} \\
           & 1 & 2 & 3	& m & n & o & p & q & m & n & o & p & q &       
           \\
\hline
$1-2-3$ & $m-p$ & $n-q$ & $o-q$ & 3 & 2 & 2 & 3 & 4 & 1 & 1 & 1 & 1.5 & 4 & 8.5\\
$2-1-3$ & $n-q$ & $m-p$ & $o-p$ & 2 & 3 & 2 & 4 & 3 & 0.5 & 2.5 & 1 & 2 & 2 & 8\\
$2-3-1$ & Infeasible & $m-p$ & $n-q$ & 2 & 2 & 0 & 2 & 2 & 0.5 & 1 & 2.5 & 2 & 2 & $>3M$ \\
 &  &  &  &  &  &  &  &  &  &  &  &  &  &  \\
Opt. solution &  $m-q$ & $n-p$ & $o-p$ & 3 & 2 & 2 & 4 & 3 & 1 & 1 & 1 & 2 & 2 & 7\\
\end{tabular}
} 

\caption{Results of Algorithm \ref{algo:greedy} and an optimal solution for the instance described Figure \ref{fig:instance_greedy}.}
\label{tab:greedy_heuristic}
\end{table}

\FloatBarrier

\subsection{Performances: Unsplittable Convex Multi-Commodity Flow}\label{sect:perf_unsplit}

The different relaxations of the Unsplittable-CMCF problem were implemented and tested on SNDLib instances, described in \ref{sect:perf_split_compact_convex}.  
Two different relaxations, $\mathcal{TIGHT-INNER}$ and $\mathcal{PATTERN}$, were considered to solve the Unsplittable-CMCF problem in a branch-and-price algorithm described in Section \ref{sect:branching}. 
Performances on the subset of instances of SNDLib with less than $400$ \add{commodities \remove{demands}}, described in Table \ref{tab:instances_descriptions}, are shown in Table \ref{tab:gap_and_perf}.
Directly solving the $\mathcal{U-COMPACT}$ problem with CPLEX proves inefficient for instances with more than a few links. 

To compare the tightness of the two formulations, the relative gap is defined as:
\begin{equation*}
    gap=\frac{R^*-S^*}{U^*-S^*}
\end{equation*}
where $S^*$, $U^*$ and $R^*$ are the optimal values of the splittable-CMCF problem, the unsplittable-CMCF problem and the relaxation considered, respectively.
Relative gaps are expected to be in $[0\%,100\%]$ for relaxations of the Unsplittable-CMCF problem.
If the relative gap of a relaxation is $0\%$, it does not provide any tightening compared to the Splittable problem, while if the relative gap is $100\%$, the relaxation's optimal value matches the Unsplittable-MCF optimal value. 
On the other hand, any feasible solution of the Unsplittable-CMCF problem, like the heuristic described in Section \ref{sect:greedy_heuristic}, have a relative gap of at least $100\%$ (and equal to $100\%$ if the heuristic provides the optimal solution to the problem).

Comparisons of performances in Table \ref{tab:gap_and_perf} illustrate the superiority of the pattern approach. First, at the root node level, the solutions of this relaxation always display higher relative gaps than the $\mathcal{TIGHT-INNER}$ ones. Second, at the branch-and-price tree level, the $\mathcal{PATTERN}$ relaxation allows to generate manageable branch-and-price trees on all instances of the tested instances but two.
On the contrary, the $\mathcal{TIGHT-INNER}$ relaxation either directly provides an unsplittable solution, in which case its relative gap is $100\%$, or leads to branch-and-price trees that do not reach the target precision of $0.1\%$ after $24$ hours. For instances that did not reach the $0.1\%$ precision after $24$ hours, the branch-and-price gap (defined as the ratio between the upper and lower bounds of the tree) is displayed between parenthesis.

For this reason, the performances of branch-and-bound trees embedding the $\mathcal{TIGHT-INNER}$ formulation are not shown.

However, the tightness of the $\mathcal{PATTERN}$ relaxation comes at the expense of its solving duration, which can be significantly higher than the $\mathcal{TIGHT-INNER}$ relaxation: on the instances tested, the former could be up to $2000$ time slower than the later.

\begin{table}
\centering
\scriptsize{
\begin{tabular}{|c|cccc|}
\hline
Instance Name & Nodes & Arcs & \add{Commodities} & Arcs $\times$  \add{Commodities} \\
\hline
pdh		&  11	&  34	&  24	&  816	\\
di-yuan		&  11	&  42	&  22	&  924	\\
polska		&  12	&  18	&  66	&  1188	\\
nobel-us		&  14	&  21	&  91	&  1911	\\
abilene		&  12	&  15	&  132	&  1980	\\
nobel-germany		&  17	&  26	&  121	&  3146	\\
dfn-bwin		&  10	&  45	&  90	&  4050	\\
atlanta		&  15	&  22	&  210	&  4620	\\
dfn-gwin		&  11	&  47	&  110	&  5170	\\
sun		&  27	&  102	&  67	&  6834	\\
newyork		&  16	&  49	&  240	&  11760	\\
ta1		&  24	&  55	&  396	&  13500	\\
\hline
\end{tabular}
}
\caption{Instances of the SNDLib with $|K|\leq 400$, tested for the Unsplittable-CMCF Problem}
\label{tab:instances_descriptions}
\end{table}

\begin{table}
\small{

\begin{subtable}{\linewidth}

\caption{Kleinrock cost functions} \label{subtab:kr_results}

\makebox[\textwidth][c]{
\scriptsize{
\begin{tabular}{|c|ccc|cc|cc|}
\hline
Instance &	 \multicolumn{3}{c|}{Root Node or Heuristic Relative Gaps} & \multicolumn{2}{c|}{Root Node Duration in minutes } & \multicolumn{2}{c|} {$\mathcal{PATTERN}$ Branch-and-Bound}\\
& $\makecell{\mathcal{TIGHT-}\\ \mathcal{INNER}}$ & $\mathcal{PATTERN}$	& \makecell{Greedy \\ Heuristic}  &  \makecell{$\mathcal{TIGHT-}$\\ $\mathcal{INNER}$} & $\mathcal{PATTERN}$ & \makecell{Duration in minutes \\ (B\&P Gap)} & \#Nodes \\
\hline
pdh							&  100\%	&  100\%	&  100\%		&  $<$ 0.1 	&  $<$ 0.1 	&  $<$ 0.1 	&  1	\\
di-yuan							&  100\%	&  100\%	&  100\%		&  $<$ 0.1 	&  $<$ 0.1 	&  $<$ 0.1 	&  1	\\
polska							&  1\%	&  91\%	&  458\%		&  0.2	&  0.7	&  2.3	&  38	\\
nobel-us							&  1\%	&  91\%	&  200\%		&  0.1	&  2.7	&  9.6	&  56	\\
abilene							&  44\%	&  99\%	&  143\%		&  0.5	&  95.3	&  148.1	&  5	\\
nobel-germany							&  52\%	&  99\%	&  168\%		&  $<$ 0.1 	&  9.8	&  38.9	&  82	\\
dfn-bwin							&  100\%	&  100\%	&  100\%		&  2.2	&  0.4	&  0.4	&  1	\\
atlanta							&  1\%	&  91\%	&  1863\%		&  0.8	&  241.1	&  645.6	&  10	\\
dfn-gwin							&  100\%	&  100\%	&  100\%		&  0.8	&  0.5	&  0.5	&  1	\\
sun							&  6\%	&  83\%	&  893\%		&  0.1	&  4.1	&  11.5	&  80	\\
newyork							&  57\%	&  $\geq$91.8\%	&  $\geq$1395230.9\%		&  0.2	&  113.8	&  (0.5\%)	&  2660	\\
ta1							&  100\%	&  100\%	&  100\%		&  2.7	&  0.8	&  0.8	&  1	\\
\hline
\end{tabular} 
} 
} 
\end{subtable}

\begin{subtable}{\linewidth}
\caption{Quadratic cost functions} \label{subtab:sq_results}
\makebox[\textwidth][c]{
\scriptsize{
\begin{tabular}{|l|ccc|cc|cc|}
\hline
Instance	&	 \multicolumn{3}{c|}{Root Node or Heuristic Relative Gaps} & \multicolumn{2}{c|}{Root Node Duration in minutes} & \multicolumn{2}{c|} {$\mathcal{PATTERN}$ Branch-and-Bound}\\
& $\makecell{\mathcal{TIGHT-}\\ \mathcal{INNER}}$ & $\mathcal{PATTERN}$	& \makecell{Greedy \\ Heuristic}  &  \makecell{$\mathcal{TIGHT-}$\\ $\mathcal{INNER}$} & $\mathcal{PATTERN}$ & \makecell{Duration in minutes \\ (B\&P Gap)} & \#Nodes \\
\hline		
pdh							&  100\%	&  100\%	&  100\%		&  $<$ 0.1 	&  $<$ 0.1 	&  $<$ 0.1 	&  1	\\
di-yuan							&  100\%	&  100\%	&  100\%		&  $<$ 0.1 	&  $<$ 0.1 	&  $<$ 0.1 	&  1	\\
polska							&  7\%	&  97\%	&  415\%		&  $<$ 0.1 	&  0.5	&  0.6	&  3	\\
nobel-us							&  8\%	&  100\%	&  204\%		&  $<$ 0.1 	&  2.7	&  2.7	&  1	\\
abilene							&  44\%	&  99\%	&  116\%		&  0.3	&  47	&  124.1	&  11	\\
nobel-germany							&  83\%	&  100\%	&  142\%		&  $<$ 0.1 	&  18.5	&  60.8	&  48	\\
dfn-bwin							&  100\%	&  100\%	&  100\%		&  0.7	&  0.1	&  0.1	&  1	\\
atlanta							&  0\%	&  $\geq$4.6\%	&  $\geq$100\%		&  0.4	&  1164.2	&  (3.2\%)	&  6	\\
dfn-gwin							&  100\%	&  100\%	&  101\%		&  0.4	&  0.3	&  0.3	&  1	\\
sun							&  69\%	&  93\%	&  325\%		&  $<$ 0.1 	&  3.3	&  59.4	&  1498	\\
newyork							&  14\%	&  89\%	&  1284\%		&  0.1	&  113.5	&  580.7	&  1227	\\
ta1							&  100\%	&  100\%	&  100\%		&  1.5	&  661.7	&  661.7	&  1	\\

\hline
\end{tabular}
} 
} 
\end{subtable}

} 
\caption{Performances of the $\mathcal{TIGHT-INNER}$ and $\mathcal{PATTERN}$ relaxations of the Unsplittable-CMCF problem, and branch-and-price performances using the $\mathcal{PATTERN}$ relaxation.}
\label{tab:gap_and_perf}
\end{table}

\FloatBarrier 
\section{Conclusion}

In this work, we presented a generalization of both the Splittable and Unsplittable Multi-Commodity Flow problems, in which the optimized cost is a sum of convex functions of the flow over the graph arcs. 
This problem can be used, for instance, in Telecommunication to model the delay of network devices. It also provides routings that avoid high levels of congestion on the network link and leave free capacity to route unknown future demands.

We showed that the Splittable problem can be solved in polynomial time for a given precision and that the Unpslittable one was $\mathcal{NP}-hard$. 
We also presented three different formulation of the Splittable problem, a compact formulation $\mathcal{COMPACT}$, a Simplicial Decomposition $\mathcal{CONVEX}$ and an inner-approximation $\mathcal{INNER}$. Experiments showed the superiority of the $\mathcal{INNER}$ approximation method in terms of computational performance, with solving duration divided by $10$ to $35$ on all large instances of the SND library. We also demonstrated the higher flexibility of the $\mathcal{INNER}$ method, allowing it to optimize routings even with black-box cost functions.

Finally, we considered the Unsplittable-CMCF Problem and showed how it can be solved with a branch-and-bound method, by leveraging two tightening of the $\mathcal{INNER}$ formulation, one directly adding constraints to the inner-approximation called $\mathcal{TIGHT-INNER}$, relying on the unsplittability of each flow, and one using a new set of variables representing commodity patterns (called $\mathcal{PATTERN}$), leveraging the unsplittability hypothesis at the level of each arc.
The computational results show that the tighter pattern approach allows us to solve most of the SNDLib instances tested.

This work can be applied to minimize the average delay in a telecommunication network, where each delay is a Kleinrock function of its utilization. Further work will be conducted to minimize the maximum delay over a set of demand, which would require, even for a Splittable problem, to solve non-convex problems.

\bibliographystyle{plain}
\bibliography{biblio}

\end{document}